\newcommand{\union}{\cup}
\newcommand{\inter}{\cap}
\renewcommand{\And}{\wedge}
\newcommand{\Union}{\bigcup}
\newcommand{\pow}[1]{{\sf pow}\ #1}
\newcommand{\emptylist}{[\,]}
\newcommand{\cons}[2]{#1\;\#\;#2} 
\newcommand{\setn}[1]{\{0, 1, \ldots, #1-1\}}        
\newcommand{\listn}[1]{[0, 1, \ldots, #1-1]}         
\newcommand{\combine}[2]{{\sf comb}\;#1\;#2} 
\newcommand{\permute}[1]{{\sf perm}\;#1} 
\newcommand{\families}[3]{{\sf fams}^{#1#2#3}} 
\newcommand{\card}[1]{|#1|}
\newcommand{\uc}[1]{{\sf uc}\ #1}
\newcommand{\uca}[2]{{\sf uc}_{#1}\ #2}
\newcommand{\closure}[1]{\left\langle#1\right\rangle}
\newcommand{\closurea}[2]{\left\langle#2\right\rangle_{#1}}
\newcommand{\ic}[2]{{\sf ic}\ #1\ #2}
\newcommand{\ica}[3]{{\sf ic}_{#1}\ #2\ #3}
\newcommand{\icatilde}[3]{{\sf \tilde{ic}}_{#1}\ #2\ #3}
\newcommand{\icaa}[4]{{\sf \tilde{ic}}_{#1}^{#4}\ #2\ #3}
\newcommand{\cnt}[2]{\#_{#1}#2} 
\newcommand{\frankl}[1]{{\sf frankl}\ #1}
\newcommand{\wf}[2]{{\sf wf}_{#2}\ #1} 
\newcommand{\sw}[2]{#1(#2)} 
\newcommand{\fw}[2]{#1(#2)} 
\renewcommand{\ss}[3]{\bar{#2}_{#3}(#1)} 
\newcommand{\fs}[3]{\bar{#2}_{#3}(#1)}  
\newcommand{\hc}[2]{{\sf hc}_{#1}^{#2}}  
\newcommand{\hs}[5]{\bar{#4}^{#2}_{#1#5}(#3)} 
\newcommand{\hcprj}[3]{{\sf hc}_{#1}^{#2}\left\lfloor{#3}\right\rfloor} 
\newcommand{\uce}[1]{{\sf uce}\ #1} 
\newcommand{\ssn}[2]{{\sf ssn}\ #1\ #2} 
\newcommand{\ssnaux}[5]{{\sf ssn}^{#3,#4,#5}\ #1\ #2} 
\newcommand{\ssnauxa}[4]{{\sf ssn}^{#3,#4}\ #1\ #2} 
\newcommand{\permuteset}[2]{{\sf perm\_set}\ #1\ #2}
\newcommand{\permutefamily}[2]{{\sf perm\_fam}\ #1\ #2}
\newcommand{\nefa}[3]{{\sf nef\_aux}^{#2}\ #1\ #3}
\newcommand{\nef}[2]{{\sf nef}^{#2}\ #1}
\newcommand{\filter}[2]{{\sf filter}\ #1\ #2}
\newcommand{\map}[2]{{\sf map}\ #1\ #2}
\newcommand{\foldl}[3]{{\sf foldl}\ #1\ #2\ #3}
\newcommand{\remdups}[1]{{\sf remdups}\ #1}
\newcommand{\sort}[1]{{\sf sort}\ #1}
\newcommand{\listsum}[1]{{\sf listsum}\ #1}
\newcommand{\nth}[2]{#1_{[#2]}}
\newcommand{\underrule}{\vbox{\hrule width.6em}}
\author{Filip Mari\' c, Miodrag \v Zivkovi\' c, Bojan Vu\v ckovi\' c}
\title{Formalizing Frankl's Conjecture: FC-families}
\institute{Faculty of Mathematics, University of Belgrade\thanks{The
    first author was partially supported by the Serbian Ministry of
    Education and Science grant 174021 and by the SNF grant SCOPES
    IZ73Z0127979/1, the second author by the Serbian Ministry of
    Education and Science grant 174021 and the third author by the
    Serbian Ministry of Education and Science grant 044006 (III).}}
\begin{document}
\maketitle

\begin{abstract}
  The Frankl's conjecture, formulated in 1979. and still open, states
  that in every family of sets closed for unions there is an element
  contained in at least half of the sets. FC-families are families for
  which it is proved that every union-closed family containing them
  satisfies the Frankl's condition (e.g., in every union-closed family
  that contains a one-element set {a}, the element a is contained in
  at least half of the sets, so families of the form {{a}} are the
  simplest FC-families). FC-families play an important role in
  attacking the Frankl's conjecture, since they enable significant
  search space pruning. We present a formalization of the computer
  assisted approach for proving that a family is an
  FC-family. Proof-by-computation paradigm is used and the proof
  assistant Isabelle/HOL is used both to check mathematical content,
  and to perform (verified) combinatorial searches on which the proofs
  rely. FC-families known in the literature are confirmed, and a new
  FC-family is discovered.
\end{abstract}

\section{Introduction}
\label{sec:introduction}
Formalized mathematics and interactive theorem provers (sometimes
referred to as proof assistants) have made great progress in recent
years. Many classical mathematical theorems have been formally proved
and proof assistants have been intensively used in hardware and
software verification. The most successful proof assistants now days
are Coq, Isabelle/HOL, HOL Light, etc.

Several of the most important results in formal theorem proving are
for the problems that require proofs with much computational
content. These proofs are usually highly complex (and therefore often
require justifications by formal means) since they combine classical
mathematical statements with complex computing machinery (usually
computer implementation of combinatorial algorithms). The
corresponding paradigm is sometimes referred to as
\emph{proof-by-evaluation} or \emph{proof-by-computation}. Probably,
the most famous examples of this approach are the proofs of the
Four-Color Theorem and the Kepler's conjecture.

Georges Gonthier has formalized a proof of the Four-Color
Theorem\footnote{In 1852.~Francis Guthrie conjectured that every map
  can be colored with at most 4 colors such that no two adjacent
  regions share the same color.} in Coq \cite{gonthier-notices}. The
Four Colour Theorem is famous for being the first long-standing
mathematical problem, analyzed by many famous mathematicians, finally
resolved by a computer program (Appel and Haken
\cite{four-color-appel}). This proof broke new ground because it
involved using IBM 370 assembly language computer programs to carry
out a gigantic case analysis, which could not be performed by hand.
The proof attracted criticism: computer programming is known to be
error-prone, and difficult to relate precisely to the formal statement
of a mathematical theorem. Several attempts to simplify the proofs
were made (e.g., Robertson et al. \cite{four-color-robertson}), number
of cases was reduced and programs were written in C instead of
assembly language. However, all doubts were removed only when Gonthier
employed proof assistant Coq reducing the whole proof to several basic
logical principles.

Another example of a similar kind is the proof of Kepler's
conjecture\footnote{In 1611 Kepler asserted that the so called
  cannonball packing is a densest arrangement of 3-dimensional balls
  of the same size.}. As described by Nipkow et al.
\cite{nipkow-tamegraphs}: ``In 1998. Thomas Hales announced the first
(by now) accepted proof of Kepler's conjecture. It involves 3 distinct
large computations. After 4 years of refereeing by a team of 12
referees, the referees declared that they were 99\% certain of the
correctness of the proof. Dissatisfied with this, Hales started the
informal open-to-all collaborative \emph{flyspeck} project to
formalize the whole proof with a theorem proof.''

\smallskip

In this work, we apply the proof-by-evaluation paradigm to a problem
of verifying FC-families --- a special case of the Frankl's
conjecture. Frankl's conjecture, an elementary and fundamental
statement formulated by P\' eter Frankl in 1979., states that for
every family of sets closed under unions, there is an element
contained in at least half of the sets (or, dually, in every family of
sets closed under intersections, there is an element contained in at
most half of the sets). Up to the best of our knowledge, the problem
is still open. The conjecture has been proved for many special
cases. In particular, it is known to be true for: (i) families of at
most 36 sets\footnote{Unpublished report by Roberts from 1992 claimis
  a similar result for families of at most 40 sets.}
\cite{frankl-lo-faro}; (ii) families of sets such that their union has
at most 11 elements \cite{frankl-bosnjak-markovic}.

FC-families are families for which it is proved that all union closed
families containing them satisfy the Frankl's condition (if the
Frankl's conjecture would be proved, then every family would be an
FC-family).  For example, it can easily be shown that if a family
contains a one-element set, then it satisfies the Frankl's
condition. Similar results holds for any two-element set,
etc. FC-families are important building block for attempting to prove
the Frankl's conjecture since they justify pruning large portions of
the search space.

\paragraph{Related work.}
The Frankl's conjecture has also been formulated and studied as a
question in lattice theory
\cite{frankl-lattices-reinhold,frankl-lattices-abe}.

FC-families have been introduced by Poonen \cite{frankl-poonen} and
further studied by Gao and Yu \cite{frankl-gao-yu}, Vaughan
\cite{frankl-vaughan-1,frankl-vaughan-2,frankl-vaughan-3}, Morris
\cite{frankl-morris}, Markovi\'c \cite{frankl-markovic}, Bo\v snjak
and Markovi\'c \cite{frankl-bosnjak-markovic}, and \v Zivkovi\'c and
Vu\v ckovi\'c \cite{frankl-zivkovic-vuckovic}.

The basic technique used (the Frankl's condition characterization
based on weight functions and shares) is introduced by Poonen
\cite{frankl-poonen} and later successfully used by Bo\v snjak and
Markovi\'c \cite{frankl-markovic,frankl-bosnjak-markovic}, and \v
Zivkovi\' c and Vu\v ckovi\'c \cite{frankl-zivkovic-vuckovic}.

First attempts in using computer-assisted computational approach on
solving special cases of the Frankl's conjecture are described by \v
Zivkovi\' c and Vu\v ckovi\' c
\cite{frankl-zivkovic-vuckovic}. Computations are performed by
(unverified) Java programs. However, in order to increase the level of
trust, Java programs generate certificates that can be checked by
independent tools.

\smallskip

The present paper represent a formalized reformulation of the results
of \v Zivkovi\'c and Vu\v ckovi\'c
\cite{frankl-zivkovic-vuckovic}. All mathematical content is
rigorously formalized within Isabelle/HOL and proofs are mechanically
checked. JAVA programs are reimplemented in a functional language of
Isabelle/HOL and their correctness is formally verified. A clear
separation of mathematical and computational content is done and parts
of the proofs that rely on computations are clearly isolated. Since
the whole formalization is performed and verified within a proof
assistant, there is no need for explicit certificates for statements
proved by computation.

Our main contribution are rigorous, machine-verifiable
proofs\footnote{Corresponding Isabelle/HOL proof documents are
  available from \url{http://argo.matf.bg.ac.rs}} that all FC-families
previously described in the literature are indeed FC-families. Unlike
most pen-and-paper proofs, our proofs follow a uniform approach,
supported by an underlying combinatorial search procedure. The second
contribution is a new type of FC-families: four three-element sets all
contained in a seven-element set.

\paragraph{Background logic and notation.}
Logic and the notation given in this paper will follow Isabelle/HOL.
Isabelle/HOL \cite{isabelle} is a development of Higher Order Logic
(HOL), and it conforms largely to everyday mathematical notation. The
basic types include truth values ($\mathit{bool}$), natural numbers
($\mathit{nat}$) and integers ($\mathit{int}$). Functions can be
defined by recursion (either primitive or general). Sets over type
$\alpha$, type $\alpha\,\mathit{set}$, follow the usual mathematical
conventions\footnote{In a strict type setting, sets containing
  elements of mixed types are not allowed.}. Sets of sets (i.e.,
object of the type $\alpha\,\mathit{set}\,\mathit{set}$) are called
families. Set of all subset for a set $A$ is denoted by $\pow{A}$, and
its number of elements is denoted by $\card{A}$. Lists over type
$\alpha$, type $\alpha\,\mathit{list}$, come with the empty list
$[\,]$, the infix prepend constructor $\#$, the infix $@$ that appends
two lists, and the conversion function $\mathit{set}$ from lists to
sets. N-th element of a list $l$ is denoted by $\nth{l}{n}$. List $[0,
1, \ldots, n-1]$ is denoted by $[0..<n]$. The function ${\sf sort}$
sorts a list, ${\sf listsum}$ calculates its sum, and ${\sf remdups}$
removes duplicate elements. List with no repeated elements are called
distinct. Standard higher order functions ${\sf map}$, ${\sf filter}$,
${\sf foldl}$ are also supported (for details see \cite{isabelle}).

\medskip
All definitions and statements given in this paper are formalized
within Isabelle/HOL. However, in order to make the text accessible to
a more general audience not familiar with Isabelle/HOL, many minor
details are omitted and some imprecisions are introduced (for example,
we used standard symbolics used in related work, although it is clear
that some symbols are ambigous). Statements are grouped into
propositions, lemmas, and theorems. Propositions usually express
simple, technical results and are printed here without proofs.  All
sets and families are considered to be finite and this assumptions
(present in Isabelle/HOL formalization) will not be explicitly stated
in the rest of the paper.

\paragraph{Outline.} The rest of the paper is organized as follows.
In Section \ref{sec:frankl} we give mathematical background on
union-closed families, the Frankl's conjecture and prove main
theoretical results. In Section \ref{sec:search} we formulate the
combinatorial search algorithm, prove its correctness and give its
efficient implementation. In Section \ref{sec:uniform} we introduce
uniform families and techniques used for avoiding symmetries when
analyzing them. In Section \ref{sec:fc} we verify several kinds of
uniform FC-families. Finally, in Section \ref{sec:conclusions} we draw
conclusions and give directions for further work.

\section{Frankl's Families}
\label{sec:frankl}

\subsection{Union Closed Families}
First we give basic definitions of union-closed families, closure
under unions, and operations used to incrementally obtain closed
families.

\begin{definition}
  Let $F$ and $F_c$ be families.

  $F$ is \emph{union closed}, denoted by $\uc{F}$, iff $\forall A \in
  F.\ \forall B \in F.\ A \union B \in F.$
  $F$ is \emph{union closed for $F_c$}, denoted by $\uca{F_c}{F}$, iff
  $\uc{F} \wedge (\forall A \in F.\ \forall B \in F_c.\ A \union B \in F).$

  \emph{Closure of $F$}, denoted by $\closure{F}$, is the minimal
  family of sets (in sense of inclusion) that contains $F$ and is
  union closed.
  \emph{Closure of $F$ for $F_c$}, denoted by $\closurea{F_c}{F}$, is the
  minimal family of sets (in sense of inclusion) that contains $F$ and
  is union closed for $F_c$.

  \emph{Insert and close operation} of set $A$ to family $F$, denoted
  by $\ic{A}{F}$, is the family $F \union\ \{A\}\ \union\ \{A \cup
  B.\ B \in F\}.$
  \emph{Insert and close operation for $F_c$} of set $A$ to family $F$,
  denoted by $\ica{F_c}{A}{F}$, is the family $F \union\ \{A\}\ \union\
  \{A \cup B.\ B \in F\}\ \union\ \{A \cup B.\ B \in F_c\}.$
\end{definition}

\begin{proposition}
\label{prop:closure}\hfill
\vspace{-2mm}
\begin{enumerate}
\item $\closure{F} = \{\Union F'.\ F' \in \pow{F} - \{\emptyset\}\}$
\item $\closure{F \union \{A\}} = \ic{A}{\closure{F}}$, \quad $\closurea{I}{F \union \{A\}} = \ica{I}{A}{\closure{F}}$
\item If $F \subseteq \pow{\Union A}$ and $\uca{A}{F}$ then
  $\uca{\closure{A}}{F}$.
\end{enumerate}
\end{proposition}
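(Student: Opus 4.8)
The plan is to reduce all three parts to a single recurring fact: in any union closed family $G$, the union $\Union G'$ of every (finite, nonempty) subfamily $G' \subseteq G$ is again a member of $G$. This I would prove by induction on the finite set $G'$ — the singleton case is immediate, and for $G' = \{B\} \cup G''$ with $G''$ nonempty one applies the induction hypothesis to obtain $\Union G'' \in G$ and then closes $(\Union G'') \cup B \in G$ using $\uc{G}$. The standing finiteness assumption is exactly what makes this induction available, and each of the three parts invokes it.

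For the first part, write $R = \{\Union F'.\ F' \in \pow{F} - \{\emptyset\}\}$. Taking singletons $F' = \{A\}$ shows $F \subseteq R$, and if $\Union F_1', \Union F_2' \in R$ then their union equals $\Union(F_1' \cup F_2') \in R$, so $R$ is union closed; minimality of $\closure{F}$ then yields $\closure{F} \subseteq R$. Conversely, each element of $R$ is a finite union of members of $F \subseteq \closure{F}$, so the recurring fact applied to $G = \closure{F}$ gives $R \subseteq \closure{F}$, and the two inclusions establish the identity.

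For the second part I would expand both sides through the explicit description just proved. A nonempty $F' \subseteq F \cup \{A\}$ either avoids $A$, contributing $\Union F' \in \closure{F}$, or contains it, contributing either $A$ (when $F' = \{A\}$) or $A \cup \Union(F' - \{A\})$ with $\Union(F' - \{A\}) \in \closure{F}$; collecting the three cases reproduces exactly $\closure{F} \cup \{A\} \cup \{A \cup B.\ B \in \closure{F}\} = \ic{A}{\closure{F}}$. The $F_c$-variant $\closurea{I}{F \cup \{A\}} = \ica{I}{A}{\closure{F}}$ follows the same case split but must additionally gather the terms $\{A \cup B.\ B \in I\}$ arising from closing the freshly inserted $A$ against $I$; the delicate point is that the right-hand side is genuinely union closed for $I$, which rests on the invariant that $\closure{F}$ is already union closed for $I$ — the natural loop invariant maintained when these identities are iterated inside the search algorithm.

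The third part is the combinatorial heart. Since $\uc{F}$ is already assumed, it remains to show $X \cup C \in F$ for every $X \in F$ and every $C \in \closure{A}$. By the first part, $C = \Union A'$ for some finite nonempty $A' \subseteq A$, so I would prove by induction on $A'$ the statement ``$X \cup \Union A' \in F$ for all $X \in F$'': the singleton case is precisely the hypothesis $\uca{A}{F}$, and in the step $A' = \{B\} \cup A''$ one sets $X' = X \cup \Union A'' \in F$ by the induction hypothesis and then closes $X' \cup B \in F$ by $\uca{A}{F}$ again, noting $X' \cup B = X \cup \Union A'$. The containment $F \subseteq \pow{\Union A}$ is not actually needed for this direction and merely keeps everything inside the ambient universe. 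I expect the main obstacle to be this induction together with the parallel bookkeeping in the $F_c$-variant of the second part: routine in spirit, but demanding care that every intermediate union genuinely lands back in the family, which is where finiteness and the union-closure hypotheses are used repeatedly.
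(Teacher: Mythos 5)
The paper prints this proposition without proof (it says explicitly that propositions are stated without proofs), so there is no official argument to compare against; I can only judge yours on its own terms. Your reduction of everything to the fact that a union-closed family contains the union of each of its nonempty finite subfamilies is the natural route, and your proofs of part~1, of the first identity in part~2, and of part~3 are correct --- including the accurate observation that the hypothesis $F \subseteq \pow{\Union A}$ is not actually needed in part~3.

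The genuine gap is exactly where you sense trouble: the identity $\closurea{I}{F \union \{A\}} = \ica{I}{A}{\closure{F}}$. As literally stated it is false: take $F = \{\{1\}\}$, $I = \{\{2\}\}$, $A = \{3\}$; then the right-hand side is $\{\{1\},\{3\},\{1,3\},\{2,3\}\}$, which is not union closed for $I$ (it misses $\{1\}\union\{2\} = \{1,2\}$), while the left-hand side contains $\{1,2\}$ by definition of $\closurea{I}{\cdot}$. So a side condition is required, and the one you import (that $\closure{F}$ is already union closed for $I$) is neither part of the statement nor sufficient: with $F = \emptyset$ that condition is vacuous, yet $\ica{I}{A}{\emptyset} = \{A\} \union \{A \union B.\ B \in I\}$ fails to be union closed whenever $I$ is not, since $(A\union B_1)\union(A\union B_2) = A \union (B_1 \union B_2)$ need not have the required form. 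What makes the identity work in the paper's actual application is that the subscript family is $\closure{F_c}$, i.e.\ itself union closed; under that extra hypothesis (or, say, $I \subseteq \closure{F}$ together with your invariant) your case analysis does go through. You should identify and state the missing hypothesis explicitly rather than appeal to a ``loop invariant'' of an algorithm that is not part of the proposition being proved.
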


\subsection{The Frankl's Condition}
The next definition formalizes the Frankl's condition and the notion
of FC-family.

\begin{definition}
  Family of sets $F$ satisfies the \emph{Frankl's condition} and we
  say that it is a \emph{Frankl's family}, denoted by $\frankl{F}$, if
  it contains an element that occurs in at least half sets in the
  family, i.e., $\frankl{F}\ \equiv\ \exists a.\ a \in \Union F\ \And\
  2 \cdot \cnt{a}{F} \ge \card{F}$, where $\cnt{a}{F}$ denotes
  $\card{\{A \in F.\ a \in A\}}$
  
  Family of sets $F_c$ is \emph{FC-family} if it is proved that every
  union closed family such that $F \supseteq F_c$ is Frankl's.
\end{definition}

\subsection{Family Isomorphisms}
The domain of the family does not play any important role for many
properties related to the Frankl's condition --- many properties are
invariant for domain changes using injective functions (that establish
a kind of isomorphisms between two families). Therefore, in many cases
it suffices to consider only families over canonical domains ---
initial ranges $\setn{n}$ of natural numbers.

\begin{proposition}
  \label{prop:inj}
  Let $F$ be a family of sets and $f$ a function injective on
  $\Union{F}$. Let $F'$ be the image of $F$ under $f$ (then $f$ is a
  bijection between $\Union{F}$ and $\Union{F'}$). \hfill
  \begin{enumerate}
  \item If $a\in \Union{F}$, then $\cnt{a}{F} = \cnt{f(a)}{F'}$.
  \item $\card{F} = \card{F'}$
  \item If $A \in F$ and $A' \in F'$ is the image of $A$ under $f$,
    then $\card{A} = \card{A'}$.
  \item $F$ is union closed if and only if $F'$ is.
  \item $F$ is Frankl's if and only if  $F'$ is.
  \item If $F'$ is an FC-family, then so is $F$.
  \end{enumerate}
\end{proposition}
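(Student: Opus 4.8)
The plan is to reduce everything to a single structural fact: since $f$ is injective on $\Union F$, the map $\Phi$ sending each $A \in F$ to its image $f(A) = \{f(x) \mid x \in A\}$ is a bijection from $F$ onto $F'$ that is compatible with membership, cardinality and unions. Concretely, I would first record three elementary lemmas about images under a function injective on a set $D \supseteq A \union B$: (i) $f(A \union B) = f(A) \union f(B)$, which holds for any function; (ii) for $a \in D$, $a \in A \iff f(a) \in f(A)$, where the nontrivial direction uses injectivity to rule out a collision $f(a) = f(x)$ with $x \neq a$; and (iii) $\card{f(A)} = \card{A}$ and, more globally, $f(A) = f(B) \implies A = B$ for $A, B \subseteq D$, again from injectivity. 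Taking $D = \Union F$, these give exactly that $\Phi$ is a well-defined bijection $F \to F'$ preserving the relevant structure.

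Parts 1--3 then follow directly. For part 3, $A' = f(A)$ with $A \subseteq \Union F$, so lemma (iii) gives $\card{A} = \card{A'}$. For part 2, $\Phi$ is surjective by the definition of $F'$ and injective by lemma (iii), so $\card{F} = \card{F'}$. For part 1, I would exhibit a bijection between $\{A \in F \mid a \in A\}$ and $\{A' \in F' \mid f(a) \in A'\}$, namely the restriction of $\Phi$: lemma (ii) shows $a \in A \iff f(a) \in f(A)$, so $\Phi$ maps the first set onto the second and is injective there as a restriction of a bijection; hence the two cardinalities $\cnt{a}{F}$ and $\cnt{f(a)}{F'}$ coincide.

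For part 4, assume $\uc{F}$ and take $A', B' \in F'$, say $A' = f(A)$, $B' = f(B)$ with $A, B \in F$. By lemma (i), $A' \union B' = f(A \union B)$, and $A \union B \in F$ by union-closedness, so $A' \union B' \in F'$. The converse follows by symmetry: since $\Union F' = f(\Union F)$, the restriction of $f$ to $\Union F$ has an inverse $g$ on $\Union F'$ which is injective there, and $F$ is the image of $F'$ under $g$, so the same argument applies with the roles reversed. Part 5 combines parts 1 and 2: a witness $a \in \Union F$ for $\frankl{F}$ maps to $f(a) \in \Union F'$, and $\cnt{f(a)}{F'} = \cnt{a}{F}$ together with $\card{F'} = \card{F}$ preserve the inequality $2 \cdot \cnt{a}{F} \ge \card{F}$; the reverse direction pulls a witness $a' \in \Union F'$ back to some $a$ with $f(a) = a'$.

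The main obstacle is part 6, because there the family in question need not live over $\Union F$. To show $F$ is an FC-family I would take an arbitrary union-closed $H \supseteq F$ and aim to apply the hypothesis to a suitable image of $H$. The difficulty is that $f$ is only known to be injective on $\Union F$, whereas $H$ may contain elements outside $\Union F$; I therefore first extend $f$ to a function $\hat f$ that is injective on all of $\Union H$ and agrees with $f$ on $\Union F$. Such an extension exists because $H$ is finite and the codomain has room for the finitely many new points; this existence claim, routine on paper, is the step I expect to require the most care in the formalization, since it depends on the ambient type having enough fresh elements. Once $\hat f$ is fixed, set $H' = \hat f(H)$. By part 4 applied to $\hat f$ and $H$, $H'$ is union closed; since $\hat f$ agrees with $f$ on $\Union F$ we have $\hat f(A) = f(A)$ for every $A \in F$, so $H' \supseteq \hat f(F) = F'$, whence $H'$ is Frankl's because $F'$ is an FC-family. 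Finally part 5, applied to $\hat f$ and $H$, transports Franklness back from $H'$ to $H$, which is exactly what is needed.
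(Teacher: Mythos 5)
The paper prints this result without proof (it declares that propositions ``are printed here without proofs''), so there is no in-paper argument to compare against; judged on its own, your proof is correct and follows the standard route: the induced map $A \mapsto f(A)$ is a bijection $F \to F'$ compatible with membership, cardinality and unions, and parts 1--5 fall out of that plus the symmetric argument via the inverse of $f$ restricted to $\Union{F}$. You also correctly isolate the only genuinely delicate point, part 6: since a union-closed $H \supseteq F$ may have $\Union{H} \supsetneq \Union{F}$, one must extend $f$ to an injection on $\Union{H}$ before pushing $H$ forward, and the existence of that extension does depend on the codomain having room for the finitely many fresh elements --- a hypothesis the paper's informal statement suppresses but which holds in all of its applications, where the codomain is the natural numbers (families are normalized into $\setn{n}$). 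No gaps beyond that explicitly flagged caveat.
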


\subsection{FC Characterization by Weight Functions and Shares}
We describe the central technique for proving that a family is
FC-family, relying on characterizations of the Frankl's condition
using weights and shares.

\begin{definition}
  A function $w: X \rightarrow \mathbb{N}$ is a \emph{weight function
    on} $A \subseteq X$, denoted by $\wf{w}{A}$, iff $\exists a \in
  A.\ w(a) > 0$.  \emph{Weight of a set $A$ wrt.~weight function $w$},
  denoted by $\sw{w}{A}$, is the value $\sum_{a \in A}w(a)$.
  \emph{Weight of a family $F$ wrt.~weight function $w$}, denoted by
  $\fw{w}{F}$, is the value $\sum_{A\in F}\sw{w}{A}$.
\end{definition}

\begin{lemma}
\label{lemma:Frankl_weight}
$\frankl{F} \iff \exists w.\ \wf{w}{(\Union F)}\ \And\ 2 \cdot \fw{w}{F} \;\ge\; \sw{w}{\Union{F}}\cdot\card{F}$
\end{lemma}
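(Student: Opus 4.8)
The plan is to establish the two implications separately: the forward direction ($\Rightarrow$) by exhibiting an explicit weight function, and the backward direction ($\Leftarrow$) by a weighted double-counting argument closed off by a short proof by contradiction.

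For $\Rightarrow$, I would assume $\frankl{F}$ and take the witness $a\in\Union F$ with $2\cdot\cnt{a}{F}\ge\card{F}$ supplied by the definition. As the weight function I would choose the point mass at $a$, namely $w(a)=1$ and $w(x)=0$ for $x\neq a$. Then $\wf{w}{(\Union F)}$ holds at once since $a\in\Union F$ and $w(a)=1>0$, and direct evaluation gives $\sw{w}{\Union F}=1$ and $\fw{w}{F}=\cnt{a}{F}$ (only sets containing $a$ contribute a $1$). The target inequality $2\cdot\fw{w}{F}\ge\sw{w}{\Union F}\cdot\card{F}$ then collapses to precisely the hypothesis $2\cdot\cnt{a}{F}\ge\card{F}$.

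The core of $\Leftarrow$ is the double-counting identity
\[
  \fw{w}{F} \;=\; \sum_{a\in\Union F} w(a)\cdot\cnt{a}{F},
\]
obtained by exchanging the order of the two finite sums in $\fw{w}{F}=\sum_{A\in F}\sum_{a\in A}w(a)$ and grouping, for each $a$, the $\cnt{a}{F}$ sets of $F$ that contain it. Substituting this (and $\sw{w}{\Union F}=\sum_{a\in\Union F}w(a)$) into the hypothesis and rearranging gives $\sum_{a\in\Union F} w(a)\cdot(2\cdot\cnt{a}{F}-\card{F})\ge 0$. I would then argue by contradiction: were $\frankl{F}$ to fail, every $a\in\Union F$ would satisfy $2\cdot\cnt{a}{F}<\card{F}$, so each factor $2\cdot\cnt{a}{F}-\card{F}$ would be strictly negative; since $w\ge 0$ throughout while $\wf{w}{(\Union F)}$ yields some $a_0$ with $w(a_0)>0$, the sum would be strictly negative, contradicting the rearranged inequality. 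Hence some $a$ meets $2\cdot\cnt{a}{F}\ge\card{F}$, which is $\frankl{F}$.

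The hard part is not mathematical but a matter of formalization hygiene: the factor $2\cdot\cnt{a}{F}-\card{F}$ is a difference of naturals, so to make the sign reasoning legitimate I would recast the inequalities over $\mathbb{Z}$ (or $\mathbb{R}$), preventing truncated natural-number subtraction from silently zeroing the negative terms. Once the arithmetic lives in a suitable ordered ring, both the summation exchange and the final positivity argument are routine finite-sum manipulations.
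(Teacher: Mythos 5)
Your proposal is correct and follows essentially the same route as the paper: the forward direction via the point-mass weight at the Frankl witness, and the converse via the double-counting identity $\fw{w}{F}=\sum_{a\in\Union F}w(a)\cdot\cnt{a}{F}$ combined with the sign argument that uses $\wf{w}{(\Union F)}$ to get strictness (the paper phrases this as a contrapositive rather than a contradiction, which is the same argument). Your closing remark about carrying the arithmetic over $\mathbb{Z}$ to avoid truncated natural subtraction is a sensible formalization point not spelled out in the paper's text.
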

\begin{proof}
  Assume $\frankl{F}$ and let $a$ be the element satisfying the
  Frankl's condition. Let $w$ be the weight function assigning 1 to
  $a$ and 0 to all other elements. Since $\fw{w}{F} = \cnt{a}{F}$ and
  $\sw{w}{\Union{F}} = 1$, the statements holds. 
  
  Conversely, suppose that $\neg \frankl{F}$. Then, for every $a \in
  \Union{F}$, $2 \cdot \cnt{a}{F} < \card{F}$. Hence, $2 \cdot \fw{w}{F} =
  \sum_{a \in \Union{F}}w(a)\cdot 2 \cdot \cnt{a}{F}$ $<$ $\card{F}
  \cdot \sum_{a \in \Union{F}}w(a)$ $=$ $\card{F} \cdot
  \sw{w}{\Union{F}}$.
\end{proof}

A concept that will enable a slightly more operative formulation of
the previous characterization is the concept of
\emph{share}\footnote{Note that in order to accommodate for computer
  implementation only integer weights are allowed, and to avoid
  rational numbers share of a set $A$ is defined as $2 \cdot \sw{w}{A}
  - \sw{w}{X}$, instead of $\sw{w}{A} - \sw{w}{X} / 2$ that is used in
  the literature.}.
\begin{definition}
  Let $w$ be a weight function.  \emph{Share of a set $A$ wrt.~$w$ and
    a set $X$}, denoted by $\ss{A}{w}{X}$, is the value $2 \cdot
  \sw{w}{A} - \sw{w}{X}$.  \emph{Share of a family $F$ wrt.~$w$ and a
    set $X$}, denoted by $\fs{F}{w}{X}$, is the value $\sum_{A \in
    F}\ss{A}{w}{X}$.
\end{definition}

\begin{example}
\label{ex:share}
Let $w$ be a function such that $w(a_0) = 1, w(a_1) = 2$, and $w(a) =
0$ for all other elements. $w$ is clearly a weight function. Then,
$\sw{w}{\{a_0, a_1, a_2\}} = 3$ and $\fw{w}{\{\{a_0, a_1\}, \{a_1,
  a_2\}, \{a_1\}\}} = 7$. Also, $\ss{\{a_1, a_2\}}{w}{\{a_0, a_1,
  a_2\}} = 2\cdot \sw{w}{\{a_1, a_2\}} - \sw{w}{\{a_0, a_1, a_2\}} = 4
- 3 = 1,$ and $\fs{\{\{a_0, a_1\}, \{a_1, a_2\},
  \{a_1\}\}}{w}{\{a_0, a_1, a_2\}} = (2\cdot 3 - 3) + (2\cdot 2 - 3) +
(2 \cdot 2 - 3) = 5.$
\end{example}

\begin{proposition}
\label{prop:Family_share}
$\fs{F}{w}{X} = 2 \cdot \fw{w}{F} - \sw{w}{X}\cdot\card{F}$
\end{proposition}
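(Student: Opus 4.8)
The plan is to unfold both relevant definitions and reduce the claim to a single manipulation of a finite sum. Starting from the definition of the family share, I would write $\fs{F}{w}{X} = \sum_{A \in F} \ss{A}{w}{X}$, and then substitute the definition of the share of an individual set, obtaining $\sum_{A \in F} (2 \cdot \sw{w}{A} - \sw{w}{X})$.

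The key step is to split this single sum into two by linearity (distributivity of summation over subtraction): $\sum_{A \in F} 2 \cdot \sw{w}{A} - \sum_{A \in F} \sw{w}{X}$. For the first summand I would pull out the constant factor and fold back the definition of family weight, giving $2 \cdot \sum_{A \in F} \sw{w}{A} = 2 \cdot \fw{w}{F}$. The second summand is a sum of the term $\sw{w}{X}$, which does not depend on the summation index $A$; summing a constant over a finite index set yields $\card{F} \cdot \sw{w}{X}$. Combining the two pieces gives $2 \cdot \fw{w}{F} - \sw{w}{X} \cdot \card{F}$, exactly as claimed.

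There is no real obstacle here: the statement is a direct computation that follows immediately once the definitions are expanded. The only point requiring minor care — particularly in the Isabelle/HOL formalization — is the justification of the two sum identities, namely that the sum of a difference equals the difference of the sums, and that the sum of a constant over a finite set equals the cardinality of that set times the constant. Both are standard library facts about finite sums, so the proof amounts to chaining them with the unfolded definitions.
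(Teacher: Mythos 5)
Your proof is correct and is exactly the intended argument: the paper prints this proposition without proof (it explicitly states that propositions express simple technical results and are left unproved), and the direct computation you give --- unfold the two definitions, split the sum by linearity, and sum the constant $\sw{w}{X}$ over the $\card{F}$ terms --- is the standard justification. The only caveat worth keeping in mind for the formalization is that the subtraction must be carried out over the integers (shares can be negative even though weights are natural numbers), which your remark about the library lemmas for finite sums already implicitly covers.
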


\begin{lemma}
\label{thm:Frankl_Family_share_ge_0}
$\frankl{F} \iff \exists w.\ \wf{w}{(\Union{F})}\ \And\ \fs{F}{w}{(\Union{F})} \ge 0$
\end{lemma}
\begin{proof}
  Follows directly from Proposition \ref{prop:Family_share} and Lemma
  \ref{lemma:Frankl_weight}.
\end{proof}

\paragraph{Hypercubes.}
Sets of a family can be grouped into so called hypercubes.
\begin{definition}
  An $S$-\emph{hypercube} with a base $K$, denoted by $\hc{K}{S}$, is
  the family $\{A.\ K \subseteq A \And A \subseteq K \union
  S\}$. Alternatively, a hypercube can be characterized by $\hc{K}{S}
  = \{K \union A.\ A \in \pow{S}\}$.
\end{definition}

\begin{example}
  \label{ex:hypercube}
  Let $S \equiv \{s_0, s_1\}$, and $K \equiv \{k_0, k_1\}$. If $K'
  \subseteq K$, then all $S$-hypercubes with a base $K'$ are:
  \begin{small}
  \begin{eqnarray*}
    \hc{\{\}}{S} &=& \{\{\}, \{s_0\}, \{s_1\}, \{s_0, s_1\}\}\\
    \hc{\{k_0\}}{S} &=& \{\{k_0\}, \{k_0, s_0\}, \{k_0, s_1\}, \{k_0, s_0, s_1\}\}\\
    \hc{\{k_1\}}{S} &=& \{\{k_1\}, \{k_1, s_0\}, \{k_1, s_1\}, \{k_1, s_0, s_1\}\}\\
    \hc{\{k_0, k_1\}}{S} &=& \{\{k_0, k_1\}, \{k_0, k_1, s_0\}, \{k_0, k_1, s_1\}, \{k_0, k_1, s_0, s_1\}\}\\
  \end{eqnarray*}
  \end{small}
\end{example}
\vspace{-5mm}

Previous example indicates that (disjoint) $S$-hypercubes can span the
whole $\pow{(K \union S)}$. Indeed, this is generally the case.
\begin{proposition} 
(i) $\pow{(K \union S)} = \bigcup_{K' \subseteq K} \hc{K'}{S}$.
(ii) If $K_1$ and $K_2$ are different and disjoint with $S$, then
    $\hc{K_1}{S}$ and $\hc{K_2}{S}$ are disjoint.
\end{proposition}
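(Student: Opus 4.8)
The plan is to prove both parts by direct set-theoretic reasoning, exploiting the alternative characterization $\hc{K'}{S} = \{A.\ K' \subseteq A \And A \subseteq K' \union S\}$ given in the definition.

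For part (i) I would argue by double inclusion. The inclusion $\supseteq$ is immediate: if $K' \subseteq K$ and $A \in \hc{K'}{S}$, then $A \subseteq K' \union S \subseteq K \union S$, so $A \in \pow{(K \union S)}$. For the reverse inclusion, given any $A \subseteq K \union S$, the natural witness for the base is $K' = A \inter K$. One checks that $K' \subseteq K$, that $K' = A \inter K \subseteq A$, and that $A \subseteq K' \union S$: indeed every $x \in A$ lies in $K \union S$, and if $x \notin S$ then $x \in K$, hence $x \in A \inter K = K'$. Thus $A \in \hc{K'}{S}$ for this $K' \subseteq K$, placing $A$ in the right-hand union.

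For part (ii) the key observation is that the base of a hypercube can be recovered from any of its members whenever that base is disjoint from $S$. Concretely, if $K \inter S = \emptyset$ and $A \in \hc{K}{S}$, then $K = A \setminus S$: the inclusion $K \subseteq A \setminus S$ follows from $K \subseteq A$ together with $K \inter S = \emptyset$, while $A \setminus S \subseteq K$ follows from $A \subseteq K \union S$. Suppose now that some $A$ lies in both $\hc{K_1}{S}$ and $\hc{K_2}{S}$, with $K_1$ and $K_2$ each disjoint from $S$. Applying this recovery to each gives $K_1 = A \setminus S = K_2$, contradicting $K_1 \neq K_2$; hence the two hypercubes share no element.

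None of the steps presents a genuine obstacle; the only point requiring care is the correct use of the disjointness hypotheses $K_i \inter S = \emptyset$ in part (ii), since it is precisely this condition that makes the map $A \mapsto A \setminus S$ invert the base. Without disjointness distinct bases could produce overlapping hypercubes, so this hypothesis is exactly what the argument turns on.
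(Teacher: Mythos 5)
Your proof is correct: the witness $K' = A \inter K$ in part (i) and the base-recovery identity $K = A - S$ (valid exactly when $K \inter S = \emptyset$) in part (ii) are precisely the right ingredients, and you correctly identify the disjointness hypothesis as the point the argument turns on. The paper states this as a proposition without proof (it classifies such results as simple technical facts), and your argument is the standard one that such a proof would contain.
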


Families of sets can be separated into (disjoint) parts belonging to
different hypercubes (formed as $\hc{K}{S} \inter F$).
\begin{definition}
  A \emph{hyper-share of a family $F$ wrt.~weight function $w$, the
    hypercube $\hc{K}{S}$ and the set $X$}, denoted by
  $\hs{K}{S}{F}{w}{X}$, is the value $\sum_{A \in \hc{K}{S} \inter
    F}\ss{A}{w}{X}$.
\end{definition}

\begin{example}
  \label{ex:hypershare}
  Let $S$ and $K$ be as in the Example \ref{ex:hypercube}, let $X
  \equiv K \union S$, let $F \equiv \{\{s_0\}, \{s_1\}, \{k_0, s_0\},
  \{k_0, k_1, s_0, s_1\}\}$, and $w(a) = 1$ for all $a \in X$.  Then,
  $\hs{\{\}}{S}{F}{w}{X} = \ss{\{s_0\}}{w}{X} + \ss{\{s_1\}}{w}{X} =
  -4$, $\hs{\{k_0\}}{S}{F}{w}{X} = \ss{\{k_0, s_0\}}{w}{X} = 0$,
  $\hs{\{k_1\}}{S}{F}{w}{X} = 0$, and $\hs{\{k_0, k_1\}}{S}{F}{w}{X}$
  $=$ $\ss{\{k_0, k_1, s_0, s_1\}}{w}{X} = 4$.
\end{example}

Share of a family can be expressed in terms of sum of hyper-shares.
\begin{proposition}
  \label{prop:Family_share_Hyper_sher}
  If $K \union S = \Union{F}$ and $K \inter S = \emptyset$,
  then $\fs{F}{w}{(\Union{F})} = \sum_{K' \subseteq K}
  \hs{K'}{S}{F}{w}{(\Union{F})}$.
\end{proposition}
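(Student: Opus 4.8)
The plan is to exhibit $F$ as a disjoint union of its intersections with the hypercubes $\hc{K'}{S}$ ranging over $K' \subseteq K$, and then to use that the share of a family is additive over such disjoint unions. First I would record that, by hypothesis, $\Union{F} = K \union S$, so the ambient set $X$ appearing in every share below is $K \union S$; and since each $A \in F$ satisfies $A \subseteq \Union{F} = K \union S$, we have $F \subseteq \pow{(K \union S)}$.

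Next I would invoke the preceding proposition on hypercubes. Part (i) gives $\pow{(K \union S)} = \Union_{K' \subseteq K} \hc{K'}{S}$. Because $K \inter S = \emptyset$, every subset $K' \subseteq K$ is disjoint from $S$, and distinct subsets of $K$ are of course distinct; hence part (ii) shows that the family $\{\hc{K'}{S} : K' \subseteq K\}$ is pairwise disjoint. Intersecting the covering identity with $F$ and using $F \subseteq \pow{(K \union S)}$ yields
$$F = \Union_{K' \subseteq K} (\hc{K'}{S} \inter F),$$
where the pieces $\hc{K'}{S} \inter F$ inherit pairwise disjointness from the hypercubes.

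Finally I would unfold the definition of the share as a sum of the per-set quantities $\ss{A}{w}{(\Union{F})}$, which depend only on the individual set $A$; such a sum is therefore additive over a finite disjoint union of subfamilies. Splitting the single sum over $F$ according to the partition above gives
$$\fs{F}{w}{(\Union{F})} = \sum_{A \in F} \ss{A}{w}{(\Union{F})} = \sum_{K' \subseteq K} \sum_{A \in \hc{K'}{S} \inter F} \ss{A}{w}{(\Union{F})} = \sum_{K' \subseteq K} \hs{K'}{S}{F}{w}{(\Union{F})},$$
which is exactly the claimed identity.

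The only genuinely delicate point is the regrouping in the last display, i.e.\ turning the single sum over $F$ into the iterated sum indexed by $K' \subseteq K$. In the formalized setting this rests on a library lemma stating that a sum over a finite union of a pairwise-disjoint indexed family equals the corresponding iterated sum; the work therefore reduces to certifying pairwise disjointness of the pieces (supplied by part (ii) of the preceding proposition) together with the standing finiteness assumptions. I expect everything else to be routine bookkeeping.
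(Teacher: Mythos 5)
Your argument is correct and is exactly the intended one: the paper states this as a proposition without proof, but the decomposition of $F$ into the pairwise-disjoint pieces $\hc{K'}{S} \inter F$ via parts (i) and (ii) of the preceding proposition, followed by splitting the sum defining $\fs{F}{w}{(\Union{F})}$ over that partition, is precisely the computation the definitions are set up to support. No gaps.
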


\begin{lemma}
\label{lemma:Frankl_all_Hyper_share_ge_0}
Let $w$ be a weight function on $\Union{F}$.  If $K \union S =
\Union{F}$, $K \inter S = \emptyset$, and $\forall K' \subseteq
K.\ \hs{K'}{S}{F}{w}{(\Union{F})} \ge 0$, then $\frankl{F}$.
\end{lemma}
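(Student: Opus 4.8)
The plan is to reduce the claim to the share characterization of the Frankl's condition from Lemma~\ref{thm:Frankl_Family_share_ge_0}. That lemma says it is enough to produce a weight function on $\Union{F}$ whose family share is non-negative. The hypotheses already supply such a candidate, namely $w$ itself, which is assumed to be a weight function on $\Union{F}$. Hence the entire proof reduces to establishing the single inequality $\fs{F}{w}{(\Union{F})} \ge 0$.

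To obtain that inequality I would first apply Proposition~\ref{prop:Family_share_Hyper_sher}. Its two side conditions, $K \union S = \Union{F}$ and $K \inter S = \emptyset$, are precisely two of our assumptions, so the proposition applies directly and rewrites the family share as a sum of hyper-shares, $\fs{F}{w}{(\Union{F})} = \sum_{K' \subseteq K} \hs{K'}{S}{F}{w}{(\Union{F})}$. I would then invoke the remaining hypothesis, that $\hs{K'}{S}{F}{w}{(\Union{F})} \ge 0$ for every $K' \subseteq K$: the right-hand side is then a finite sum of non-negative terms and so is itself non-negative, giving $\fs{F}{w}{(\Union{F})} \ge 0$.

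Finally, feeding this back into the right-to-left direction of Lemma~\ref{thm:Frankl_Family_share_ge_0} --- using the same $w$, which is a weight function on $\Union{F}$ by hypothesis and now has non-negative family share --- yields $\frankl{F}$. I do not expect any genuine obstacle here: the statement is a straightforward composition of the two cited results, and the only thing requiring attention is checking that their side conditions (the partition of $\Union{F}$ into the disjoint sets $K$ and $S$, and the weight-function property of $w$) are all discharged by the hypotheses rather than needing to be re-derived.
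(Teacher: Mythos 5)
Your proposal is correct and follows exactly the route the paper takes: the paper's proof simply states that the lemma is an immediate consequence of Proposition~\ref{prop:Family_share_Hyper_sher} and Lemma~\ref{thm:Frankl_Family_share_ge_0}, which is precisely the composition you spell out. The only difference is that you make explicit the verification of the side conditions and the summation of non-negative hyper-shares, which the paper leaves implicit.
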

\begin{proof}
  Immediate consequence of Proposition \ref{prop:Family_share_Hyper_sher}
  and Lemma \ref{thm:Frankl_Family_share_ge_0}.
\end{proof}

\begin{definition}
  \emph{Projection of a family $F$ onto a hypercube $\hc{K}{S}$},
  denoted by $\hcprj{K}{S}{F}$, is the set $\{A - K.\ A \in \hc{K}{S}
  \inter F\}$.
\end{definition}

\begin{example}
  \label{ex:hcprj}
  Let $K$, $S$ and $F$ be as in Example \ref{ex:hypershare}. Then
  $\hcprj{\{\}}{S}{F} = \{\{s_0\}, \{s_1\}\}$, $\hcprj{\{k_0\}}{S}{F}
  = \{\{s_0\}\}$, $\hcprj{\{k_1\}}{S}{F} = \{\}$, and $\hcprj{\{k_0,
    k_1\}}{S}{F} = \{\{s_0, s_1\}\}$.
\end{example}

\begin{proposition}\label{prop:hcprj}
\hfill
\begin{enumerate}
\item If $K \inter S = \emptyset$ and $K' \subseteq K$, then
  $\hcprj{K'}{S}{F} \subseteq \pow{S}$
\item If $\uc{F}$, then $\uc{(\hcprj{K}{S}{F})}$.
\item If $\uc{F}$, $F_c \subseteq F$, $S = \Union F_c$, $K \inter S =
  \emptyset$, then $\uca{F_c}{(\hcprj{K}{S}{F})}$.
\item If $\forall x \in K.\ w(x) = 0$, then $\hs{K}{S}{F}{w}{X} =
  \fs{\hcprj{K}{S}{F}}{w}{X}$.
\end{enumerate}
\end{proposition}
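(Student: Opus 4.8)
The plan is to treat the four claims in turn, exploiting throughout the two facts that every $A \in \hc{K}{S}$ satisfies $K \subseteq A \subseteq K \union S$ and that projection acts by the map $A \mapsto A - K$. For (i), I would take an arbitrary element of $\hcprj{K'}{S}{F}$, which by definition has the form $A - K'$ for some $A \in \hc{K'}{S} \inter F$; from $A \subseteq K' \union S$ together with $K' \subseteq K$ (so that $K' \inter S = \emptyset$ since $K \inter S = \emptyset$) it follows that $A - K' \subseteq (K' \union S) - K' = S$, giving $A - K' \in \pow{S}$.

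For (ii), given two projected sets $A - K$ and $B - K$ with $A, B \in \hc{K}{S} \inter F$, the key observation is that the hypercube $\hc{K}{S}$ is itself union closed: from $K \subseteq A, B \subseteq K \union S$ we obtain $K \subseteq A \union B \subseteq K \union S$, hence $A \union B \in \hc{K}{S}$, while $A \union B \in F$ follows from $\uc{F}$. Since $(A \union B) - K = (A - K) \union (B - K)$, the union of the two projected sets lies in $\hcprj{K}{S}{F}$, which is exactly $\uc{(\hcprj{K}{S}{F})}$. Claim (iii) adds closure against $F_c$: I would again form $A \union B$ for $A \in \hc{K}{S} \inter F$ and $B \in F_c$, use $B \subseteq \Union F_c = S$ to conclude $K \subseteq A \union B \subseteq K \union S$ and so $A \union B \in \hc{K}{S} \inter F$, and then note that $K \inter S = \emptyset$ forces $B - K = B$, so $(A \union B) - K = (A - K) \union B$; combined with (ii) this establishes $\uca{F_c}{(\hcprj{K}{S}{F})}$.

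For (iv), I would rewrite both sides as sums and compare them term by term. The central point is that the map $A \mapsto A - K$ is injective on $\hc{K}{S} \inter F$: because $K \subseteq A$ for every such $A$, one can recover $A = (A - K) \union K$, so projection induces a bijection between $\hc{K}{S} \inter F$ and $\hcprj{K}{S}{F}$. This lets me reindex $\fs{\hcprj{K}{S}{F}}{w}{X}$ as $\sum_{A \in \hc{K}{S} \inter F} \ss{A - K}{w}{X}$. It then remains to show $\ss{A - K}{w}{X} = \ss{A}{w}{X}$ for each such $A$; since $\ss{A}{w}{X} = 2 \cdot \sw{w}{A} - \sw{w}{X}$, this reduces to $\sw{w}{A - K} = \sw{w}{A}$, which follows from the disjoint decomposition $A = (A - K) \union K$ together with the hypothesis that $w$ vanishes on $K$, making $\sw{w}{K} = 0$.

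The main obstacle is the bijectivity step in (iv): a careless reindexing of the family share would be invalid if two distinct sets in $\hc{K}{S} \inter F$ could project to the same set, collapsing terms in the sum. Establishing injectivity — which hinges essentially on the containment $K \subseteq A$ forced by membership in the hypercube — is what makes the reindexing legitimate; once that is in place, the weight cancellation $\sw{w}{K} = 0$ and the termwise identification of shares are routine.
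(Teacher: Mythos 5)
Your proof is correct: all four parts check out, and you rightly flag the one non-trivial point, namely that $K \subseteq A$ makes $A \mapsto A - K$ injective on $\hc{K}{S} \inter F$, which legitimizes the reindexing of the sum in part (4). The paper prints propositions without proofs, so there is nothing to compare against, but your argument is the natural one and matches what the omitted proof would have to be.
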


\paragraph{Union closed extensions.}
The next definition introduces an important notion for checking
FC-families.
\begin{definition}
  \emph{Union closed extensions} of a family $F_c$ are families that
  are created from elements of $F_c$ and are union closed for $F_c$.
  Family of all union closed extensions is denoted by $\uce{F_c}$, and
  $\uce{F_c} \equiv \{F'.\ F' \subseteq \pow{\Union{F_c}} \And
  \uca{F_c}{F'}\}$.
\end{definition}

\begin{lemma}
  \label{lemma:Frankl_Min_Family_share_ge_0}
  Let $F$ be a non-empty union closed family, and let $F_c$ be a
  subfamily (i.e., $F_c \subseteq F$). Let $S$ denote $\Union{F_c}$,
  and let $K$ denote $\Union{F} - \Union{F_c}$.  Let $w$ be a weight
  function on $\Union{F}$, that is zero for all elements of $K$.  If
  shares of all union closed extension of $F_c$ are nonnegative, then
  $F$ is Frankl's, i.e., if $\forall F' \in
  \uce{F_c}.\ \fs{F'}{w}{(\Union{F_c})} \ge 0$, then $\frankl{F}$.
\end{lemma}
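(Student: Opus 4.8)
The plan is to reduce the statement to Lemma~\ref{lemma:Frankl_all_Hyper_share_ge_0}, which already delivers $\frankl{F}$ once every hyper-share $\hs{K'}{S}{F}{w}{\Union F}$ with $K' \subseteq K$ is nonnegative. First I would observe that the very definitions $S = \Union F_c$ and $K = \Union F - \Union F_c$ give $K \union S = \Union F$ and $K \inter S = \emptyset$, so the partition hypotheses of that lemma hold automatically and the whole task reduces to certifying nonnegativity of the hyper-shares.

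So fix an arbitrary $K' \subseteq K$. Since $K \inter S = \emptyset$ we have $K' \inter S = \emptyset$, and because $w$ vanishes on all of $K$ it vanishes on $K'$; hence Proposition~\ref{prop:hcprj}(4) applies and rewrites the hyper-share as a family share of the projection, $\hs{K'}{S}{F}{w}{\Union F} = \fs{\hcprj{K'}{S}{F}}{w}{\Union F}$. To feed this into the hypothesis I would next check that the projection is a union-closed extension of $F_c$: Proposition~\ref{prop:hcprj}(1) gives $\hcprj{K'}{S}{F} \subseteq \pow S = \pow{(\Union F_c)}$, and Proposition~\ref{prop:hcprj}(3) (with $\uc F$, $F_c \subseteq F$, $S = \Union F_c$, and $K' \inter S = \emptyset$) gives $\uca{F_c}{(\hcprj{K'}{S}{F})}$. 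These are exactly the two defining clauses of $\uce{F_c}$, so $\hcprj{K'}{S}{F} \in \uce{F_c}$, and the hypothesis then supplies $\fs{\hcprj{K'}{S}{F}}{w}{\Union F_c} \ge 0$.

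The one genuinely delicate point is that Proposition~\ref{prop:hcprj}(4) produced a share taken with respect to $\Union F$, whereas the hypothesis is stated with respect to $\Union F_c$. Here is where the vanishing of $w$ on $K$ does its work: a share depends on its reference set $X$ only through $\sw{w}{X}$, and since $\Union F = S \union K$ with $w$ identically $0$ on $K$ we get $\sw{w}{\Union F} = \sw{w}{S} = \sw{w}{\Union F_c}$, so the two family shares are literally equal. Chaining the three equalities then gives $\hs{K'}{S}{F}{w}{\Union F} = \fs{\hcprj{K'}{S}{F}}{w}{\Union F_c} \ge 0$; as $K'$ was an arbitrary subset of $K$, Lemma~\ref{lemma:Frankl_all_Hyper_share_ge_0} concludes $\frankl{F}$. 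I expect this reference-set bookkeeping to be the only real obstacle---recognizing that the two notions of share coincide precisely because $w$ is zero off $S$---while the identification of projections with union-closed extensions is immediate from Proposition~\ref{prop:hcprj}.
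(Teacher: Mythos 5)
Your proposal is correct and follows essentially the same route as the paper's proof: reduce to Lemma~\ref{lemma:Frankl_all_Hyper_share_ge_0}, rewrite each hyper-share as the family share of the projection via Proposition~\ref{prop:hcprj}, show the projection lies in $\uce{F_c}$, and use the vanishing of $w$ on $K$ to reconcile the reference sets $\Union F$ and $\Union F_c$. The ``delicate point'' you isolate is exactly the final step of the paper's argument.
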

\begin{proof}
   Since, $K \union S = \Union F$ and $K \inter S = \emptyset$, by
   Lemma \ref{lemma:Frankl_all_Hyper_share_ge_0}, it suffices to show
   that $\forall K' \subseteq K.\ \hs{K'}{S}{F}{w}{(\Union{F})} \ge
   0$.  Fix $K'$ and assume that $K' \subseteq K$. Since $w$ is zero
   on $K$, by Proposition \ref{prop:hcprj}, it holds that
   $\hs{K'}{S}{F}{w}{(\Union{F})} =
   \fs{\hcprj{K'}{S}{F}}{w}{(\Union{F})}$. On the other hand, since
   $\uc{F}$, $F_c \subseteq F$, and $K \inter S = \emptyset$, by
   Proposition \ref{prop:hcprj} it holds that
   $\uca{F_c}{(\hcprj{K'}{S}{F})}$.  Moreover, $\hcprj{K'}{S}{F}
   \subseteq \pow{S}$, so $\hcprj{K'}{S}{F} \in \uce{F_c}$. Then,
   $\fs{\hcprj{K'}{S}{F}}{w}{(\Union{F_c})} \ge 0$ holds from the
   assumption. However, since $w$ is zero on $K$, it holds that
   $\sw{w}{\Union{F_c}} = \sw{w}{\Union{F}}$ and
   $\fs{\hcprj{K'}{S}{F}}{w}{(\Union{F})} =
   \fs{\hcprj{K'}{S}{F}}{w}{(\Union{F_c})} \ge 0$
\end{proof}

\begin{theorem}
\label{thm:FC_uce_shares_nonneg}
A family $F_c$ is an FC-family if there is a weight function $w$ such
that shares (wrt.~$w$ and $\Union F_c$) of all union closed extension
of $F_c$ are nonnegative.
\end{theorem}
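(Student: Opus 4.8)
The plan is to unfold the definition of FC-family and reduce the claim to Lemma~\ref{lemma:Frankl_Min_Family_share_ge_0}, which already does all the real work. By definition, to show that $F_c$ is an FC-family I must establish $\frankl{F}$ for an arbitrary union closed family $F$ with $F \supseteq F_c$. So I would fix such an $F$ together with the weight function $w$ supplied by the hypothesis, for which $\wf{w}{(\Union{F_c})}$ holds and $\fs{F'}{w}{(\Union{F_c})} \ge 0$ for every $F' \in \uce{F_c}$.

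Before applying the lemma I would normalize $w$. The lemma needs a weight function that vanishes on $K = \Union{F} - \Union{F_c}$, whereas the hypothesis only controls $w$ on $\Union{F_c}$. Define $w'$ to agree with $w$ on $\Union{F_c}$ and to be $0$ elsewhere. Since every $F' \in \uce{F_c}$ satisfies $F' \subseteq \pow{(\Union{F_c})}$, each set occurring in such an $F'$ is a subset of $\Union{F_c}$, so both $\ss{A}{w}{(\Union{F_c})}$ and $\sw{w}{\Union{F_c}}$ are unchanged when $w$ is replaced by $w'$; hence $\fs{F'}{w'}{(\Union{F_c})} = \fs{F'}{w}{(\Union{F_c})} \ge 0$ continues to hold, and $\wf{w'}{(\Union{F_c})}$ is preserved because the witness $a$ with $w(a) > 0$ lies in $\Union{F_c}$. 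By construction $w'$ is zero on every element outside $\Union{F_c}$, in particular on $K$ for the chosen $F$ (and indeed for every admissible $F$, since $K$ is always disjoint from $\Union{F_c}$).

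Next I would check the remaining side conditions of Lemma~\ref{lemma:Frankl_Min_Family_share_ge_0}. Non-emptiness of $F$ comes for free: $\wf{w}{(\Union{F_c})}$ yields some $a \in \Union{F_c}$, so $F_c$ is non-empty and therefore so is $F \supseteq F_c$. Taking $S = \Union{F_c}$ and $K = \Union{F} - \Union{F_c}$, the relations $K \union S = \Union{F}$ and $K \inter S = \emptyset$ are immediate from $F_c \subseteq F$, and $w'$ is a weight function on $\Union{F}$ (the same witness $a$ works, since $\Union{F_c} \subseteq \Union{F}$) that is zero on $K$. With the share hypothesis transported to $w'$, every premise of the lemma is met, so it delivers $\frankl{F}$; as $F$ was an arbitrary union closed superfamily of $F_c$, this shows $F_c$ is an FC-family.

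The only genuinely delicate point is the normalization step: the theorem offers a single $w$ depending only on $F_c$, while the lemma is phrased for each particular $F$ and insists that the weight function be zero on the \emph{varying} complement $K$. Restricting $w$ to $\Union{F_c}$ resolves this uniformly, because every admissible $K$ is disjoint from $\Union{F_c}$; verifying that this replacement leaves the share hypothesis intact (since all relevant sets live inside $\Union{F_c}$) is the one computation that needs care, and it is entirely routine.
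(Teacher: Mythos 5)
Your proposal is correct and follows essentially the same route as the paper: fix an arbitrary union-closed $F \supseteq F_c$, replace $w$ by the function $w'$ that agrees with $w$ on $\Union F_c$ and vanishes elsewhere, note that shares of union closed extensions are unchanged, and apply Lemma~\ref{lemma:Frankl_Min_Family_share_ge_0}. Your write-up is merely more explicit about the side conditions (non-emptiness of $F$, disjointness of $K$ and $\Union F_c$, preservation of the weight-function witness), which the paper leaves implicit.
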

\begin{proof}
  Consider a union-closed family $F \supseteq F_c$. Let $w$ be the
  weight function such that $\forall F' \in
  \uce{F_c}.\ \fs{F'}{w}{(\Union{F_c})} \ge 0$. Let $w'$ be a function
  equal to $w$ on $\Union F_c$ and 0 on other elements. Since $\forall
  F' \in \uce{F_c}.\ \fs{F'}{w'}{(\Union{F_c})} =
  \fs{F'}{w}{(\Union{F_c})}$, Lemma
  \ref{lemma:Frankl_Min_Family_share_ge_0} applies to $F$ and $F$ is
  Frankl's.
\end{proof}

\section{Combinatorial search}
\label{sec:search}

Theorem \ref{thm:FC_uce_shares_nonneg} inspires a procedure for
verifying FC families. It should take a weight function on
$\Union{F_c}$ and check that all union closed extensions of $F_c$ have
nonnegative shares. We will now define a procedure
\emph{SomeShareNegative}, denoted by $\ssn{F_c}{w}$, such that if
$\ssn{F_c}{w} = \bot$, then for all $F' \in \uce{F_c}$ it holds that
$\fs{F'}{w}{(\Union{F_c})} \ge 0$. The heart of this procedure will be
a recursive function $\ssnaux{L}{F_t}{F_c}{w}{X}$ that preforms a
systematic traversal of all union closed extensions of $F_c$, but with
pruning that speeds up the search. If a union closed extension of
$F_c$ has a negative share, it must contain one or more sets with a
negative share. Therefore, a list $L$ of all different subsets of
$\Union{F_c}$ with negative shares is formed and each candidate family
is determined by elements of $L$ that it includes.  A recursive
procedure creates all candidate families by processing elements of $L$
sequentially, either skipping them (in one recursive branch) or
including them into the current candidate family $F_t$ (in the other
recursive branch), maintaining the invariant that the current
candidate family $F_t$ is always union closed. If the current element
of $L$ has been already included in $F_t$ (by earlier closure
operations required to maintain the invariant) the search can be
pruned. If the sum of (negative) shares of the remaining elements of
$L$ is less then the (nonnegative) share of the current $F_t$, then
$F_t$ cannot be extended to a family with a negative share (even in
the extreme case when all the remaining elements of $L$ are included)
so, again, the search can be pruned.

\begin{definition}The function $\ssnaux{L}{F_t}{F_c}{w}{X}$ is defined by 
a primitive recursion (over the structure of the list $L$):
  \begin{eqnarray*}
   \ssnaux{\emptylist}{F_t}{F_c}{w}{X} &\equiv& \fs{F_t}{w}{X} < 0 \\
   \ssnaux{(\cons{h}{t})}{F_t}{F_c}{w}{X} &\equiv& \mathrm{ if\ } \fs{F_t}{w}{X} + \sum_{A \in \cons{h}{t}}\ss{A}{w}{X} \ge 0\mathrm{\ then\ } \bot \\
    && \mathrm{else\ if\ }\ssnaux{t}{F_t}{F_c}{w}{X} \mathrm{\ then\ } \top\\
    && \mathrm{else\ if\ }h \in F_t \mathrm{\ then\ } \bot\\
    && \mathrm{else\ } \ssnaux{t}{(\ica{F_c}{h}{F_t})}{F_c}{w}{X}
  \end{eqnarray*}

  Let $L$ be a distinct list such that its set is $\{A.\ A \in
  \pow{\Union{F_c}} \And \ss{A}{w}{X} < 0\}$.
  $$\ssn{F_c}{w} \equiv \ssnaux{L}{\emptyset}{\closure{F_c}}{w}{(\Union{F_c})}$$
\end{definition}

Next we prove the soundnes of the $\ssn{F_c}{w}$ function.

\begin{lemma}
  \label{lemma:ssnaux_correct}
  If (i) $\ssnaux{L}{F_t}{F_c}{w}{X} = \bot$, (ii) for all elements
  $A$ in $L$ it holds that $\ss{A}{w}{X} < 0$, (iii) for all $A \in F'
  - F_t$, if $\ss{A}{w}{X} < 0$, then $A$ is in $L$, (iv) $F'
  \supseteq F_t$, and (v) $\uca{F_c}{F'}$, then $\fs{F'}{w}{X} \ge 0$.
\end{lemma}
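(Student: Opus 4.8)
The plan is to prove the statement by induction on the structure of the list $L$, matching the primitive recursion used to define $\ssnaux{L}{F_t}{F_c}{w}{X}$. This is natural because the recursion branches on whether $L$ is empty or a cons-cell, and the pruning conditions inside the recursive call give exactly the case distinctions I will need. Throughout, the hypotheses (ii)--(v) serve as an invariant that I must check is preserved (or appropriately adjusted) when passing to recursive calls, so the induction is really over a strengthened statement where $F_t$, $F'$, and $L$ all vary.

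\medskip

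For the base case $L = \emptylist$, hypothesis (i) says $\fs{F_t}{w}{X} < 0$ is false, hence $\fs{F_t}{w}{X} \ge 0$. Hypothesis (iii) now becomes the crucial ingredient: with $L$ empty, it asserts that every $A \in F' - F_t$ has $\ss{A}{w}{X} \ge 0$. Since $\fs{F'}{w}{X} = \fs{F_t}{w}{X} + \sum_{A \in F' - F_t}\ss{A}{w}{X}$ (using $F' \supseteq F_t$ from (iv) and additivity of the share over the disjoint union $F' = F_t \cup (F' - F_t)$), both summands are nonnegative and the conclusion follows.

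\medskip

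For the inductive step $L = \cons{h}{t}$, I examine the if-then-else structure of $\ssnaux{(\cons{h}{t})}{F_t}{F_c}{w}{X}$ under the assumption that it evaluates to $\bot$. The first guard cannot have fired (it returns $\top$), so $\fs{F_t}{w}{X} + \sum_{A \in \cons{h}{t}}\ss{A}{w}{X} \ge 0$ is false; I will not even need this directly but it rules out that branch. The second guard tells me $\ssnaux{t}{F_t}{F_c}{w}{X} = \bot$, which is exactly the inductive hypothesis applied with the tail $t$ in place of $L$ and the same $F_t$. I now split on whether $h \in F'$. If $h \notin F'$, then every $A \in F' - F_t$ with $\ss{A}{w}{X} < 0$ lies in $\cons{h}{t}$ by (iii) but is not $h$, hence lies in $t$, so (iii) holds for $t$; the remaining hypotheses (ii), (iv), (v) carry over unchanged for the tail, and the inductive hypothesis yields $\fs{F'}{w}{X} \ge 0$. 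If $h \in F'$, I must instead use the fourth branch: since $\ssnaux{(\cons{h}{t})}{F_t}{F_c}{w}{X} = \bot$ and the third guard would return $\bot$ only when $h \in F_t$, I further subdivide. When $h \in F_t$ the situation reduces as in the $h \notin F'$ case (the set $h$ is already accounted for in $F_t$, so it never appears in $F' - F_t$, and (iii) restricts to $t$). When $h \notin F_t$ the recursion takes the final branch $\ssnaux{t}{(\ica{F_c}{h}{F_t})}{F_c}{w}{X} = \bot$, so I apply the inductive hypothesis with $F_t$ replaced by $\ica{F_c}{h}{F_t}$.

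\medskip

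The main obstacle is this last sub-case, where I must verify that all five hypotheses hold for the enlarged family $F_t' \equiv \ica{F_c}{h}{F_t}$. Hypotheses (i) and (ii) are immediate. For (iv), I need $F' \supseteq \ica{F_c}{h}{F_t}$: since $h \in F'$, $F_t \subseteq F'$, and $F'$ is union closed for $F_c$ (hypothesis (v)), every set of the form $h \cup B$ with $B \in F_t$ or $B \in F_c$ again lies in $F'$, so the insert-and-close operation stays inside $F'$; this is where the definition of $\ica{F_c}{h}{F_t}$ and the $\uca{F_c}{}$ closure property do the work. For (v), I must check $\uca{F_c}{F_t'}$, i.e. that inserting-and-closing preserves union-closedness for $F_c$ --- a technical but routine consequence of the definition. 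The subtlest point is (iii): I must show that any $A \in F' - F_t'$ with $\ss{A}{w}{X} < 0$ appears in $t$. Such an $A$ lies in $F' - F_t \subseteq$ (by (iii) for $\cons{h}{t}$) the set of members of $\cons{h}{t}$, and $A \neq h$ because $h \in F_t' = \ica{F_c}{h}{F_t}$ means $h \notin F' - F_t'$; hence $A \in t$. Assembling these verifications is where the care lies, but each is a direct unfolding of the earlier definitions.
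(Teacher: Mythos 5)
There is a genuine gap in your inductive step: you have misread the first branch of the recursion. When the guard $\fs{F_t}{w}{X} + \sum_{A \in \cons{h}{t}}\ss{A}{w}{X} \ge 0$ holds, the function returns $\bot$, not $\top$ --- this is precisely the pruning step described in the prose before the definition (if the current nonnegative share cannot be dragged below zero even by adding every remaining negative-share set, the search is cut off). So hypothesis (i) does \emph{not} let you conclude that this guard is false; on the contrary, the case where it is true is a case you must handle, and it is the one case where the conclusion cannot be obtained from the inductive hypothesis (the recursive call $\ssnaux{t}{F_t}{F_c}{w}{X}$ is never evaluated there, so you know nothing about its value). The paper handles this case by a direct, non-inductive argument: writing $N$ and $P$ for the negative- and nonnegative-share parts of $F' - F_t$, hypothesis (iii) gives $N \subseteq \cons{h}{t}$ and hypothesis (ii) gives that all shares over $\cons{h}{t}$ are negative, whence $\sum_{A \in \cons{h}{t}}\ss{A}{w}{X} \le \sum_{A \in N}\ss{A}{w}{X}$ and therefore $\fs{F'}{w}{X} \ge \fs{F_t}{w}{X} + \sum_{A \in N}\ss{A}{w}{X} \ge \fs{F_t}{w}{X} + \sum_{A \in \cons{h}{t}}\ss{A}{w}{X} \ge 0$. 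Your proof omits this argument entirely, and your remark that you ``will not even need this directly'' is exactly backwards: the inequality is the whole point of that branch.

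The remainder of your inductive step --- once the first guard is assumed false, so that $\ssnaux{t}{F_t}{F_c}{w}{X} = \bot$, followed by the case split on $h \in F'$ and $h \in F_t$, and the verification that hypotheses (ii)--(v) transfer to the recursive call with $\ica{F_c}{h}{F_t}$ (using $h \in F'$, $F_t \subseteq F'$ and $\uca{F_c}{F'}$ to show $\ica{F_c}{h}{F_t} \subseteq F'$, and the membership $h \in \ica{F_c}{h}{F_t}$ to restrict (iii) to $t$) --- matches the paper's proof and is sound. But as written the proof is incomplete until the pruning case is supplied.
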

\begin{proof}
  The proof is by induction.
  First, note that 
  \begin{small}
  \begin{equation}
    \label{eq:1}
    \fs{F'}{w}{X} = \sum_{A \in F'}\ss{A}{w}{X} =
    \sum_{A \in F_t}\ss{A}{w}{X} + \sum_{A \in F' - F_t}\ss{A}{w}{X}.
  \end{equation}
  \end{small}

  Consider the base case of $L = \emptylist$. Since
  $\ssnaux{\emptylist}{F_t}{F_c}{w}{X} = \bot$, it holds that $\sum_{A
    \in F_t}\ss{A}{w}{X} = \fs{F_t}{w}{X} \ge 0$ and first term in
  (\ref{eq:1}) is nonnegative. If there were some $A \in F'-F_t$ such
  that $\ss{A}{w}{X} < 0$, then, from the assumptions it would be in
  $L$, which is impossible since $L$ is empty. Therefore, the second
  term in (\ref{eq:1}) is also nonnegative which completes the proof.

  \smallskip
  Consider the inductive step, and assume that $L \equiv \cons{h}{t}$.
  
  \smallskip
  First consider the case when $\fs{F_t}{w}{X} + \sum_{A \in
    \cons{h}{t}}\ss{A}{w}{X} \ge 0$. Let $P$ denote the set $\{A.\ A
  \in F' - F_t \And \ss{A}{w}{X} \ge 0\}$, and let $N$ denote the set
  $\{A.\ A \in F' - F_t \And \ss{A}{w}{X} < 0\}$. 
  Since, by assumptions, all elements of $N$ are in $L \equiv
  \cons{h}{t}$, and since, by assumptions, all shares of $\cons{h}{t}
  - N$ are negative, it holds that
  \begin{small}
  \begin{equation}
    \label{eq:2}
    \sum_{A \in \cons{h}{t}}\ss{A}{w}{X} = \sum_{A \in N}\ss{A}{w}{X} +
  \sum_{A \in \cons{h}{t} - N}\ss{A}{w}{X} \le \sum_{A \in
    N}\ss{A}{w}{X}.
  \end{equation}
  \end{small}

  It holds that $\sum_{A \in F' - F_t}\ss{A}{w}{X} = \sum_{A \in
    P}\ss{A}{w}{X} + \sum_{A \in N}\ss{A}{w}{X}.$
  Therefore, since all shares of $P$ are nonnegative, from
  (\ref{eq:1}) and (\ref{eq:2}) and the assumption of the current case
  it holds that

  \begin{small}
  $$\fs{F'}{w}{X} \ge \sum_{A \in F_t}\ss{A}{w}{X} + \sum_{A \in N}\ss{A}{w}{X} \ge \fs{F_t}{w}{X} + \sum_{A \in \cons{h}{t}}\ss{A}{w}{X} \ge 0.$$
  \end{small}

  \smallskip Next, consider the case when $\fs{F_t}{w}{X} + \sum_{A
    \in \cons{h}{t}}\ss{A}{w}{X} < 0$. Since, by assumptions,
  $\ssnaux{(\cons{h}{t})}{F_t}{F_c}{w}{X} = \bot$, by the definition of
  ${\sf ssn}$ it must hold that $\ssnaux{t}{F_t}{F_c}{w}{X} = \bot$.

  \smallskip Consider the case when $h \in F_t$ or $h \notin F'$. Then
  $h \notin F' - F_t$. The conclusion follows by induction hypothesis
  for the recursive call $\ssnaux{t}{F_t}{F_c}{w}{X}$, since all
  assumptions are satisfied. Indeed, all elements of $F' - F_t$ with
  negative shares must be in $t$, since $h \notin F' - F_t$, and other
  assumptions are trivially satisfied.
  
  \smallskip Finally, consider the case when $h \notin F_t$ and $h \in
  F'$. The conclusion follows by induction hypothesis for the
  recursive call $\ssnaux{t}{(\ica{F_c}{h}{F_t})}{F_c}{w}{X}$, since all
  assumptions are satisfied for this call. Indeed, in this case
  $\ssnaux{(\cons{h}{t})}{F_t}{F_c}{w}{X} =
  \ssnaux{t}{(\ica{F_c}{h}{F_t})}{F_c}{w}{X}$ and the left hand side is
  $\bot$ from the current assumptions. All elements of $F' -
  \ica{F_c}{h}{F_t}$ with negative shares must be in $t$. Indeed, this
  holds since $F_t \subseteq \ica{F_c}{h}{F_t}$, and $h \in
  \ica{F_c}{h}{F_t}$, and since all elements of $F' - F_t$ with
  negative shares are in $\cons{h}{t}$. It holds that
  $\ica{F_c}{h}{F_t} \subseteq F'$ since $F_t \subseteq F'$, $h \in
  F'$ and $\uca{F_c}{F'}$. Other assumptions trivially hold.
\end{proof}

\begin{theorem}
  \label{lemma:ssn_correct}
  If $\ssn{F_c}{w} = \bot$ and $F' \in \uce{F_c}$ then
  $\fs{F'}{w}{(\Union{F_c})} \ge 0$.
\end{theorem}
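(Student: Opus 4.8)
The plan is to obtain this theorem as a direct specialization of Lemma~\ref{lemma:ssnaux_correct}, instantiating its generic recursion parameters with the concrete values appearing in the definition of $\ssn{F_c}{w}$. Recall that by definition $\ssn{F_c}{w} = \ssnaux{L}{\emptyset}{\closure{F_c}}{w}{(\Union{F_c})}$, where $L$ is a distinct list whose underlying set is exactly $\{A.\ A \in \pow{\Union{F_c}} \And \ss{A}{w}{(\Union{F_c})} < 0\}$. Accordingly I would apply Lemma~\ref{lemma:ssnaux_correct} with the substitution $F_t := \emptyset$, with the lemma's family parameter taken to be $\closure{F_c}$, and with $X := \Union{F_c}$, and then discharge its five hypotheses one at a time.

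Hypotheses (i)--(iv) should be essentially immediate. Hypothesis (i), namely $\ssnaux{L}{\emptyset}{\closure{F_c}}{w}{(\Union{F_c})} = \bot$, is precisely the assumption $\ssn{F_c}{w} = \bot$ after unfolding the definition. Hypothesis (ii), that every element of $L$ has negative share, follows directly from the defining property of $L$. For hypothesis (iv), $F' \supseteq \emptyset$ is trivial. For hypothesis (iii) I would use that $F' \in \uce{F_c}$ entails $F' \subseteq \pow{\Union{F_c}}$: since $F' - \emptyset = F'$, any $A \in F'$ lies in $\pow{\Union{F_c}}$, so whenever additionally $\ss{A}{w}{(\Union{F_c})} < 0$ the set $A$ belongs to the defining set of $L$ and hence to $L$.

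The one step requiring genuine attention is hypothesis (v), which --- because the family parameter of the recursion has been instantiated to $\closure{F_c}$ --- demands $\uca{\closure{F_c}}{F'}$ rather than the weaker $\uca{F_c}{F'}$ that membership $F' \in \uce{F_c}$ directly supplies. This gap is exactly what Proposition~\ref{prop:closure}(3) is designed to bridge: taking $A := F_c$ and the family there to be $F'$, the two facts $F' \subseteq \pow{\Union{F_c}}$ and $\uca{F_c}{F'}$ (both packaged in $F' \in \uce{F_c}$) yield $\uca{\closure{F_c}}{F'}$, as required. I expect this to be the main --- indeed the only --- conceptual obstacle: one must notice that the soundness lemma was established relative to the closed family $\closure{F_c}$, whereas the theorem speaks of extensions that are union closed merely for $F_c$, and then invoke closure-preservation to reconcile the two. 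With all five hypotheses discharged, Lemma~\ref{lemma:ssnaux_correct} delivers $\fs{F'}{w}{(\Union{F_c})} \ge 0$, which is precisely the claim.
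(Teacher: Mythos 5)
Your proposal is correct and follows essentially the same route as the paper's proof: both instantiate Lemma~\ref{lemma:ssnaux_correct} with $F_t := \emptyset$, the family parameter $\closure{F_c}$, and $X := \Union{F_c}$, discharge hypotheses (i)--(iv) directly from the definition of $L$ and $F' \in \uce{F_c}$, and use Proposition~\ref{prop:closure} to upgrade $\uca{F_c}{F'}$ to $\uca{\closure{F_c}}{F'}$ for hypothesis (v). You correctly identified the closure-upgrade step as the only non-trivial point.
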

\begin{proof}
  Fix $F'$ from $\uce{F_c}$. Then $F' \subseteq \pow{\Union{F_c}}$ and
  $\uca{F_c}{F'}$. Let $L$ be a distinct list such that its set is
  $\{A.\ A \in \pow{\Union{F_c}} \And \ss{A}{w}{X} < 0\}$. From
  $\ssn{F_c}{w} = \bot$ and the definition of ${\sf ssn}$ it holds
  that $\ssnaux{L}{\emptyset}{\closure{F_c}}{w}{(\Union{F_c})} =
  \bot$.  All assumptions of Lemma \ref{lemma:ssnaux_correct}
  apply. Indeed, for all $A$ in $L$, $\ss{A}{w}{(\Union{F_c})} <
  0$. For all $A$ in $F' - \emptyset$, if $\ss{A}{w}{(\Union{F_c})} <
  0$, then, since $F' \subseteq \pow{\Union{F_c}}$, $A$ is in
  $L$. $\emptyset \subseteq F'$. Since $\uca{F_c}{F'}$, by Proposition
  \ref{prop:closure}, it holds that
  $\uca{\closure{F_c}}{F'}$. Therefore, $\fs{F'}{w}{(\Union{F_c})} \ge
  0$ holds.
\end{proof}

Apart from being sound, the procedure can also be shown to be
complete. Namely, it could be shown that if $\ssn{F_c}{w} = \top$,
then there is an $F' \in \uce{F_c}$ such that
$\fs{F'}{w}{(\Union{F_c})} < 0$. This comes from the invariant that
the current family $F_t$ in the search is always in $\uce{F_c}$, which
is maintained by taking the closure $\ica{F_c}{h}{F_t}$ whenever an
element $h$ is added. Since this aspect of the procedure is not
relevant for the rest of the proofs, it will not be formally stated
nor proved.

\subsection{Efficient implementation}
In order to obtain executability and increase efficiency, a series of
refinements of $\ssn{F}{w}$ is done. Each refined version introduces a
new implementation feature that makes it more efficient than the
previous one, but still equivalent with it.

First, a function cannot operate on families of sets. Without loss of
generality, it suffices only to consider families of sets of natural
numbers. Sets of natural numbers are represented by natural number
codes. A set $A$ is represented by the code $\tilde{A} = \sum_{k \in
  A} 2^k$. Families of sets of natural numbers $F$ are represented by
(distinct) lists of natural number codes $\tilde{F}$. This
representation will be referred to as \emph{list-of-nats}
representation (e.g., $F = \{\{0, 1\}, \{1, 2\}, \{0, 1, 2\}\}$ is
represented by the list-of-nats $\tilde{F} = [3, 6, 7]$).  Basic set
operations have their corresponding list-of-nat counterparts.
\begin{itemize}
\item The union of two sets $\union$ corresponds to bitwise
  disjunction (denoted by $\sqcup$).  It holds that if $C =
  A\;\union\;B$, then $\tilde{C} = \tilde{A}\;\sqcup\;\tilde{B}$.
\item Adding a set $A$ to a family of sets $F$ (i.e., $A\;\union\;F$)
  corresponds to the operation (also denoted by $\sqcup$) that
  prepends $\tilde{A}$ to $\tilde{F}$, but only if it is not already
  present, i.e., by: $\mathrm{if\ } \tilde{A} \in \tilde{F}
  \mathrm{\ then\ } \tilde{F} \mathrm{\ else\ }
  \cons{\tilde{A}}{\tilde{F}}$. It holds that if $F' = A \union F$,
  then $\tilde{F'} = \tilde{A} \sqcup \tilde{F}$.
\item Union of two families (i.e., $F' \union F$), also denoted by
  $\sqcup$, is performed by iteratively adding sets from one family to
  another, i.e., as $\foldl{(\lambda\ \tilde{A}\ \tilde{F}.\
  \tilde{A} \sqcup \tilde{F})}{\tilde{F}}{\tilde{F'}}$. It holds that
  if $F'' = F \union F'$, then $\tilde{F''} = \tilde{F} \sqcup
  \tilde{F'}$.
\item Adding a set $A$ to all members of a family of sets $F$ (i.e.,
  $\{A \union B.\ B \in F\}$), denoted by
  $[\tilde{A}\;\sqcup\;\tilde{B}.\ \tilde{B} \in \tilde{F}]$, is
  performed by $\map{(\lambda\ \tilde{B}.\ \tilde{A} \sqcup
  \tilde{B})}{\tilde{F}}$. It holds that if $F' = \{A \union B.\ B \in
  F\}$, then $\tilde{F'} = [\tilde{A}\;\sqcup\;\tilde{B}.\ \tilde{B}
  \in \tilde{F}]$.
\item Insert and close for $F$ (i.e., $\ica{F_c}{a}{F}$), denoted by
  $\tilde{{\sf ic}}$, is computed as
  $([\tilde{A}]\ @\ [\tilde{A}\;\sqcup\;\tilde{B}.\ \tilde{B} \in
  \tilde{F}]\ @\ [\tilde{A}\;\sqcup\;\tilde{B}.\ \tilde{B} \in
  \tilde{F_c}] )\ \sqcup\ \tilde{F}$. It holds that if $F' =
  \ica{F_c}{a}{F}$, then $\tilde{F'} =
  \icatilde{\tilde{F_c}}{\tilde{a}}{\tilde{F}}.$
\end{itemize}

Important optimization to the basic $\ssn{F_c}{w}$ procedure is to
avoid repeated computations of family shares (both for the elements of
the list $L$ and the current family $F_t$). So, instead of accepting a
list of families of sets $L$, and the current family of sets $F_t$,
the function is modified to accept a list of ordered pairs where first
component is a list-of-nats representation of corresponding element of
$L$, and the second component is its share (wrt.~$w$ and $X$), and to
accept an ordered pair $(\tilde{F_t}, s_t)$ where $\tilde{F_t}$ is the
list-of-nats representation of $F_t$, and $s_t$ is its family share
(wrt.~$w$ and $X$). The summation of shares of elements in $L$ is also
unnecessarily repeated. It can be avoided if the sum ($s_l$) is passed
trough the function.

\begin{small}
  \begin{eqnarray*}
    \ssnaux{(\emptylist, 0)}{(\tilde{F_t}, s_t)}{\tilde{F_c}}{w}{X} &\equiv& s_t < 0 \\
    \ssnaux{(\cons{(\tilde{h}, s_h)}{t},\;s_l)}{(\tilde{F_t},\;s_t)}{\tilde{F_c}}{w}{X} &\equiv& \mathrm{ if\ } s_t + s_l \ge 0\mathrm{\ then\ } \bot \\
    && \mathrm{else\ if\ }\ssnaux{(t,\;s_l - s_h)}{(\tilde{F_t},\;s_t)}{\tilde{F_c}}{w}{X} \mathrm{\ then\ } \top\\
    && \mathrm{else\ if\ }\tilde{h} \in \tilde{F_t} \mathrm{\ then\ } \bot\\
    && \mathrm{else\ let\ } \tilde{F_t}' = \icatilde{\tilde{F_c}}{\tilde{h}}{\tilde{F_t}};\ s_t' = \fs{\tilde{F_t}'}{w}{X} \mathrm{\ in}\\
    && \quad  \ssnaux{(t, ls - s_h)}{(\tilde{F_t}',s_t'\;)}{\tilde{F_c}}{w}{X}
  \end{eqnarray*}
\end{small}

Another source of inefficiency is the calculation of
$\fs{\tilde{F_t}'}{w}{X}$. If performed directly based on the
definition of family share for $\tilde{F_t}'$, the sum would contain
shares of all elements from $\tilde{F_t}$ and of all elements that are
added to $\tilde{F_t}$ when adding $\tilde{h}$ and closing for
$\tilde{F}$. However, it is already known that the sum of shares for
elements of $\tilde{F_t}$ is $s_t$ and the implementation could
benefit from this fact. Also, calculating shares of sets that are
added to $\tilde{F_t}$ can be made faster. Namely, it happens that set
share of a same set is calculated over and over again in different
parts of the search space. So, it is much better to precompute shares
of all sets from $\pow{X}$ and store them in a lookup table that will
be consulted each time a set share is needed. Note that in this case
there is no more need to pass the function $w$ itself, nor the domain
$X$, but only the lookup table, denoted by $s_w$.

\vspace{-1mm}
\begin{small}
  \begin{eqnarray*}
    \ssnauxa{(\emptylist, 0)}{(\tilde{F_t}, s_t)}{\tilde{F}_c}{s_w} &\equiv& s_t < 0 \\
    \ssnauxa{(\cons{(\tilde{h}, s_h)}{t},\;s_l)}{(\tilde{F_t},\;s_t)}{\tilde{F}_c}{s_w} &\equiv& \mathrm{ if\ } s_t + s_l \ge 0\mathrm{\ then\ } \bot \\
    && \mathrm{else\ if\ }\ssnauxa{(t,\;s_l - s_h)}{(\tilde{F_t},\;s_t)}{\tilde{F}_c}{s_w} \mathrm{\ then\ } \top\\
    && \mathrm{else\ if\ }\tilde{h} \in \tilde{F_t} \mathrm{\ then\ } \bot\\
    && \mathrm{else\ } \ssnauxa{(t, s_l - s_h)}{(\icaa{\tilde{F}_c}{\tilde{h}}{(\tilde{F_t}, s_t)}{s_w})}{\tilde{F}_c}{s_w}\\
  \icaa{\tilde{F}_c}{\tilde{h}}{(\tilde{F_t}, s_t)}{s_w} &\equiv& \mathrm{let\ }\ add\ = \ [\tilde{h}]\ @\
       [\tilde{h}\;\sqcup\;\tilde{A}.\ \tilde{A} \in \tilde{F_t}]\ @\
  [\tilde{h}\;\sqcup\;\tilde{A}.\ \tilde{A} \in \tilde{F}_c];\\
& & \qquad add\ =\ \filter{(\lambda \tilde{A}.\ \tilde{A}\notin \tilde{F})}{(\remdups{add})}\ \textrm{in}\\
& & (add\; @\;\tilde{F},\ s + \listsum{(\map{s_w}{add})})\\
\end{eqnarray*}
\end{small}
\vspace{-9mm}

It is shown that this implementation is (in some sense) equivalent to
the starting, abstract one. This proof is technically involved, but
conceptually uninteresting so we omit it in the text.

\section{Uniform $nkm$-families}
\label{sec:uniform}

Most FC-families that are considered in this paper are \emph{uniform},
i.e., consist of sets having the same number of elements.

\begin{definition}
  A family of sets $F$ is a \emph{uniform $nkm$-family} if it
  contains $m$ different sets, each containing $k$ elements and their
  union has at most $n$ elements. Uniform $nkm$-family is
  \emph{natural} if its union is contained in $\setn{n}$.
\end{definition}
  
Within the Isabelle/HOL implementation, natural $nkm$-families will be
represented by \emph{$nkm$-lists} --- (lexicografically) sorted,
distinct lists of length $m$ containing sorted, distinct lists of
length $k$ with all elements contained in $\setn{n}$. To simplify
presentation, we will identify natural $nkm$-families with their
corresponding $nkm$-lists. Assuming that the Isabelle/HOL function
$\combine{l}{k}$ generates all sorted $k$-element sublists of a sorted
list $l$, all $nkm$-lists for given $n$, $k$ and $m$ can be generated
by $\families{n}{k}{m} \equiv \combine{(\combine{[0..<n]}{k})}{m}$.

\paragraph{Symmetries.}

Often one uniform $nkm$-family can be obtained from the other by
permuting its elements (e.g., $\{\{a_0, a_1, a_2\}, \{a_1 , a_3 ,
a_4\}, \{a_2, a_3, a_4\}\}$ can be obtained from $\{\{a_0, a_1 ,
a_2\}, \{a_0, a_1 , a_3\}, \{a_2, a_3, a_4 \}\}$ by the permutation
$(a_0,$ $a_1,$ $a_2,$ $a_3,$ $a_4)$ $\mapsto$ $(a_3, a_4, a_1, a_2,
a_0)$). Applying permutations on sets and families can be implemented
in Isabelle/HOL by the functions $\permuteset{A}{p} \equiv
\sort{(\map{(\lambda x.\ \nth{p}{x})}{A})}$ and $\permutefamily{F}{p}
\equiv \sort{(\map{{\sf perm\_set}}{F})}$. Permutations establish
bijections between natural uniform families:
\begin{proposition}
  \label{prop:permfam}
  If $p$ is a permutation of $\listn{n}$ and $F$ is a natural uniform
  family, then $\permutefamily{F}{p}$ is also natural uniform family
  and there is a bijection between $F$ and $\permutefamily{F}{p}$.
\end{proposition}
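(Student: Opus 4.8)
The plan is to reduce everything to the abstract framework of Proposition~\ref{prop:inj} by recognizing $\permutefamily{F}{p}$ as the image of $F$ under a single injective (indeed bijective) function. Define $f$ by $f(x) = \nth{p}{x}$. Since $p$ is a permutation of $\listn{n}$, the map $f$ is a bijection of $\setn{n}$ onto itself. The first fact I would establish is that, as a \emph{set}, $\permuteset{A}{p}$ coincides with the image $\{f(x).\ x \in A\}$: the ${\sf map}$ applies $f$ elementwise and the ${\sf sort}$ only fixes a canonical list order without changing the underlying set. Consequently, identifying each $nkm$-list with the family it represents, $\permutefamily{F}{p}$ is exactly the image of $F$ under the map $A \mapsto \permuteset{A}{p}$. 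This is the single bridging fact that connects the concrete list-based operations with the abstract image discussed in Proposition~\ref{prop:inj}, and proving it amounts to the standard identities $\mathit{set}(\sort{l}) = \mathit{set}\ l$ and $\mathit{set}(\map{f}{l}) = \{f(x).\ x \in \mathit{set}\ l\}$.

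With this identification in place, the bijection part of the claim is immediate: the map $A \mapsto \permuteset{A}{p}$ restricted to $F$ is the required bijection onto $\permutefamily{F}{p}$. Surjectivity holds by the very definition of $\permutefamily{F}{p}$ as the (sorted) image of $F$ under this map, while injectivity follows because $f$ is injective on $\Union{F} \subseteq \setn{n}$, so distinct sets of $F$ have distinct images. In the language of Proposition~\ref{prop:inj}, $f$ is injective on $\Union{F}$ and $\permutefamily{F}{p}$ is the image of $F$ under $f$, which is precisely the hypothesis of that proposition.

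It remains to verify that $\permutefamily{F}{p}$ is again a natural uniform $nkm$-family, i.e., a valid $nkm$-list. For each $A \in F$ we have $A \subseteq \setn{n}$ and $\card{A} = k$; since $f$ maps $\setn{n}$ bijectively onto itself, each image is again contained in $\setn{n}$, and by part~(3) of Proposition~\ref{prop:inj} the cardinality $k$ is preserved, so every permuted set is a sorted distinct list of length $k$ with entries in $\setn{n}$. The union of the permuted family equals $\{f(x).\ x \in \Union{F}\} \subseteq \setn{n}$, so the natural-domain condition holds. Finally, the outer family still has exactly $m$ distinct members: by the injectivity just established, distinct members of $F$ map to distinct sets, and the outer ${\sf sort}$ merely reorders them, preserving distinctness and length $m$.

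The main obstacle is not any single deep step but the careful reconciliation of the list representation with the abstract set-image: one must check that ${\sf sort}$ and ${\sf map}$ interact with $\mathit{set}$ as expected and, crucially, that the distinctness and length invariants required of an $nkm$-list are preserved. These preservation facts rest entirely on the injectivity of $f$ --- it is what guarantees that no two distinct elements inside a set, and no two distinct sets inside the family, are identified after permuting. Once the bridging identity is secured, the remaining cardinality, union, and bijection statements follow directly from Proposition~\ref{prop:inj}.
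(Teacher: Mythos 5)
Your argument is correct. The paper deliberately prints its propositions without proofs (they are flagged as ``simple, technical results''), so there is no official proof to compare against; your reduction --- identifying $\permutefamily{F}{p}$ with the image of $F$ under the bijection $x \mapsto \nth{p}{x}$ via the bridging identities for ${\sf sort}$ and ${\sf map}$, and then invoking injectivity to preserve cardinalities, distinctness, and the length-$m$/length-$k$ invariants --- is exactly the intended elementary argument and fills that gap cleanly.
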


Since, by Proposition \ref{prop:inj}, FC-families are preserved under
bijections (isomorphisms), to check if all elements of a given list of
$nkm$-families $\mathcal{F}$ are FC-families, many elements need not
be considered. Indeed, it suffices to consider only a list (denoted by
$\nef{\mathcal{F}}{P}$) of its non-equivalent representatives (under a
given list of permutations $P$). Computation of such representatives
can start from the given list $\mathcal{F}$, choose its arbitrary
member for a representative, remove it and all its permuted variants
from the lists, and repeat this sieving process until the list becomes
empty. Isabelle/HOL implementation of this procedure can be given by:

\begin{small}
\begin{eqnarray*}
  \nefa{\mathcal{F}}{P}{\mathcal{F}_r} &\equiv& \mathrm{case}\ \mathcal{F}\ \mathrm{of}\ \emptylist \Rightarrow \mathcal{F}_r\\
  &|& \cons{F}{\underrule} \Rightarrow \mathrm{let}\ \mathcal{F}_F^P = \remdups{(\map{(\lambda\ p.\ \permutefamily{F}{p})}{P})}\ \mathrm{in}\\
  && \qquad\qquad\nefa{(\filter{(\lambda\ F.\ F \notin \mathcal{F}_F^P)}{\mathcal{F}})}{P}{(\cons{F}{\mathcal{F}_r})}\\
  \nef{\mathcal{F}}{P} &\equiv& \nefa{\mathcal{F}}{P}{\emptylist}
\end{eqnarray*}
\end{small}

The following lemma proves the correctness of this implementation.

\begin{lemma}
  \label{lemma:nef}
  If $P$ is a list of permutations of $\listn{n}$ and if $\mathcal{F}$
  is a list of natural $nkm$-families, then for each element $F \in
  \mathcal{F}$ there is an $F' \in \nef{\mathcal{F}}{P}$ such there is
  a bijection between $F$ and $F'$.
\end{lemma}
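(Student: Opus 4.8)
The plan is to prove a stronger statement about the auxiliary function $\nefa{\mathcal{F}}{P}{\mathcal{F}_r}$ that also accounts for the accumulator, and then to instantiate it with $\mathcal{F}_r = \emptylist$ to obtain the lemma (recalling that $\nef{\mathcal{F}}{P} \equiv \nefa{\mathcal{F}}{P}{\emptylist}$). Concretely, I would show: \emph{for every list $\mathcal{F}$ of natural $nkm$-families and every accumulator list $\mathcal{F}_r$ of natural $nkm$-families, each $F$ that is a member of $\mathcal{F}$ or of $\mathcal{F}_r$ admits an $F' \in \nefa{\mathcal{F}}{P}{\mathcal{F}_r}$ with a bijection between $F$ and $F'$.} Before starting I would record two facts about the relation ``there is a bijection between two families''. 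First, it is reflexive (the identity on $\Union F$ witnesses a bijection between $F$ and itself) and it is symmetric and transitive, since the witnesses of Proposition \ref{prop:inj} are injective functions on the union that can be inverted and composed. Second, since the identity permutation lies in $P$ and natural $nkm$-families are stored as sorted lists, $\permutefamily{F}{\mathrm{id}} = F$, hence $F \in \remdups{(\map{(\lambda p.\ \permutefamily{F}{p})}{P})}$.

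The proof proceeds by induction following the recursion of $\nefa{\cdot}{\cdot}{\cdot}$ (equivalently, by induction on the length of $\mathcal{F}$). The base case $\mathcal{F} = \emptylist$ is immediate: $\nefa{\emptylist}{P}{\mathcal{F}_r} = \mathcal{F}_r$, and for each member $F$ of $\mathcal{F}_r$ we take $F' = F$ with the reflexive bijection. For the step, write $\mathcal{F} = \cons{F_0}{\mathcal{F}''}$, set $\mathcal{F}_{F_0}^P = \remdups{(\map{(\lambda p.\ \permutefamily{F_0}{p})}{P})}$ and $\mathcal{F}' = \filter{(\lambda G.\ G \notin \mathcal{F}_{F_0}^P)}{\mathcal{F}}$, so that unfolding the definition gives $\nefa{\mathcal{F}}{P}{\mathcal{F}_r} = \nefa{\mathcal{F}'}{P}{(\cons{F_0}{\mathcal{F}_r})}$. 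Since $F_0 \in \mathcal{F}_{F_0}^P$, the filter deletes $F_0$, so $\mathcal{F}'$ is strictly shorter and the induction hypothesis applies to $\mathcal{F}'$ with accumulator $\cons{F_0}{\mathcal{F}_r}$; note that its result is exactly the target family, and that both $\mathcal{F}'$ (a sublist of $\mathcal{F}$) and $\cons{F_0}{\mathcal{F}_r}$ again consist of natural $nkm$-families.

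It remains to locate a representative for an arbitrary member $F$ of $\mathcal{F}$ or $\mathcal{F}_r$. If $F \in \mathcal{F}_r$ or $F = F_0$, then $F$ is a member of the new accumulator $\cons{F_0}{\mathcal{F}_r}$ and the induction hypothesis directly yields a suitable $F'$. Otherwise $F$ is a member of $\mathcal{F}''$ distinct from $F_0$, and I split on whether the filter keeps it. If $F \notin \mathcal{F}_{F_0}^P$, then $F$ survives into $\mathcal{F}'$ and the induction hypothesis again applies. The interesting case is $F \in \mathcal{F}_{F_0}^P$: then $F = \permutefamily{F_0}{p}$ for some $p \in P$, so Proposition \ref{prop:permfam} gives a bijection between $F_0$ and $F$; meanwhile the induction hypothesis, applied to the accumulator member $F_0$, produces an $F'$ in the result with a bijection between $F_0$ and $F'$. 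Composing these two witnesses, using symmetry and transitivity of the bijection relation, yields a bijection between $F$ and $F'$, as required.

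The main obstacle I anticipate is not any single case but choosing the generalized invariant so that the induction closes cleanly: one must track that representatives already committed to $\mathcal{F}_r$ stay covered, that the head $F_0$ is moved into the accumulator (so that its filtered-out permuted variants can be routed through it), and that ``there is a bijection between families'' is genuinely an equivalence so that the composition in the key case is legitimate. The remaining point requiring care is well-foundedness of the recursion, i.e.\ that the filter always removes the head; this hinges on $\permutefamily{F_0}{\mathrm{id}} = F_0$ with $\mathrm{id} \in P$, which is routine once the sortedness of natural $nkm$-families is invoked.
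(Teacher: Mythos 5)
Your proof is correct and follows essentially the same route as the paper: induction along the recursion of $\nefa{\cdot}{\cdot}{\cdot}$, with the base case handled by the accumulator and the permuted-variants case discharged by Proposition~\ref{prop:permfam}; the only real difference is that you fold the accumulator into a strengthened induction hypothesis (and therefore need symmetry and transitivity of the bijection relation to compose witnesses), whereas the paper instead proves the monotonicity fact $\mathcal{F}_r \subseteq \nefa{\mathcal{F}}{P}{\mathcal{F}_r}$ separately, so that the head $F_0$ itself --- not merely a bijective copy --- lies in the result and no composition is needed. Your closing remark on well-foundedness is a fair observation (the paper silently assumes the filter shortens the list), but be aware that $\mathrm{id}\in P$ is not among the lemma's hypotheses, so that step strictly requires either this extra assumption or a reading of the definition in which the head is always discarded.
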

\begin{proof}
  First, note that the function $\nefa{\mathcal{F}}{P}{\mathcal{F}_r}$
  is monotone, i.e., $\mathcal{F}_r \subseteq
  \nefa{\mathcal{F}}{P}{\mathcal{F}_r}$.

  By induction, we show that if the assumptions hold for $\mathcal{F}$
  and $P$, then for each element $F \in \mathcal{F}$ there is an
  element $F' \in \nefa{\mathcal{F}}{P}{\mathcal{F}_r}$ such there is
  a bijection between $F$ and $F'$.

  In the base case, when $\mathcal{F}$ is empty, the statement
  trivially holds.

  Assume that $\mathcal{F} \equiv \cons{F}{\mathcal{F}'}$. Let
  $\mathcal{F}_F^P$ denote all different families obtained by
  permuting $F$ by all elements of $P$ (i.e., $\mathcal{F}_F^P \equiv
  \remdups{(\map{(\lambda\ p.\ \permutefamily{F}{p})}{P})}$) and let
  $\mathcal{F}^-$ denote what remains of $\mathcal{F}$ when those are
  removed (i.e., $\mathcal{F}^- \equiv \filter{(\lambda\ F.\ F \notin
    \mathcal{F}_F^P)}{\mathcal{F}}$. It holds that
  $\nefa{\mathcal{F}}{P}{\mathcal{F}_r} =
  \nefa{\mathcal{F}^-}{P}{(\cons{F}{\mathcal{F}_r})}$.

  Let $F'$ be an arbitrary element from $\mathcal{F}$. Since
  $\mathcal{F} = \cons{F}{\mathcal{F}'}$, either $F' = F$ or $F' \in
  \mathcal{F}'$.

  Assume that $F' = F$. By monotonicity it holds that $F \in
  \nefa{\mathcal{F}}{P}{\mathcal{F}_r}$, so $F$ is an element from
  $\nefa{\mathcal{F}}{P}{\mathcal{F}_r}$ such that there is a
  bijection (identity function) between $F'$ and it.

  Assume that $F' \in \mathcal{F}'$.

  Consider the case when $F' \in \mathcal{F}_F^P$. Then there is $p
  \in P$ such that $F' = \permutefamily{F}{p}$. Since $F' \in
  \mathcal{F}$ is natural and $p \in P$ is a permutation of
  $\listn{n}$, by Proposition \ref{prop:permfam}, there is a bijection
  between $F$ and $F'$. Since, by monotonicity, it holds that $F \in
  \nefa{\mathcal{F}}{P}{\mathcal{F}_r}$, $F$ is an element in
  $\nefa{\mathcal{F}}{P}{\mathcal{F}_r}$ such that there is a
  bijection between $F'$ and it.

  Consider the case when $F' \notin \mathcal{F}_F^P$. Then $F' \in
  \mathcal{F}^-$. By inductive hypothesis for the call
  $\nefa{\mathcal{F}^-}{P}{(\cons{F}{\mathcal{F}_r})}$, there is an
  element $F''$ in $\cons{F}{\mathcal{F}_r}$ such that there is a
  bijection between $F'$ and it. By monotonicity, $F'' \in
  \cons{F}{\mathcal{F}_r} \subseteq
  \nefa{\mathcal{F}^-}{P}{(\cons{F}{\mathcal{F}_r})} =
  \nefa{\mathcal{F}}{P}{\mathcal{F}_r}$, so the statement holds.
\end{proof}

Finally, the following lemma shows that only non-equivalent
representatives need to be considered when checking FC-families.

\begin{lemma}
  \label{lemma:FC_family_non_equivalent_families}
  Let $\mathcal{F} \subseteq \families{n}{k}{m}$ and $P \subseteq
  \permute{\listn{n}}$. If all families represented by elements of
  $\nef{\mathcal{F}}{P}$ are FC-families, then all families represented by
  elements of $\families{n}{k}{m}$ are FC-families.
\end{lemma}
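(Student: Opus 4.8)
The plan is to reduce the claim to a single arbitrary family and then chain together the two preceding results: Lemma~\ref{lemma:nef}, which exhibits for every family a bijective representative among the non-equivalent ones, and Proposition~\ref{prop:inj}(6), which transports the FC-family property backwards along such a bijection. Since, by Proposition~\ref{prop:inj}, being an FC-family is invariant under the isomorphisms induced by injective domain maps, it suffices to have verified the property on one representative per equivalence class.

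Concretely, I would fix an arbitrary family $F$ represented by an element of $\families{n}{k}{m}$ and show that $F$ is an FC-family. Applying Lemma~\ref{lemma:nef} with the permutation list $P \subseteq \permute{\listn{n}}$ produces an $F' \in \nef{\mathcal{F}}{P}$ together with a bijection between $F$ and $F'$, and by hypothesis this $F'$ is an FC-family. It then remains to check that this bijection is precisely of the form demanded by Proposition~\ref{prop:inj}(6): by Proposition~\ref{prop:permfam} the representative is obtained from $F$ by permuting its domain, so $F'$ is the image of $F$ under a permutation of $\listn{n}$, i.e.\ under a function injective on $\Union F$. Proposition~\ref{prop:inj}(6) then yields that $F$ itself is an FC-family, and since $F$ was arbitrary the conclusion follows for every element of $\families{n}{k}{m}$.

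The main obstacle is matching the two interfaces exactly and keeping the bookkeeping honest. First, Lemma~\ref{lemma:nef} only furnishes a representative for members of its input list, so to reach \emph{every} element of $\families{n}{k}{m}$ it must be invoked on (a list enumerating) the whole of $\families{n}{k}{m}$; the hypothesis that the elements of $\nef{\mathcal{F}}{P}$ are FC-families must be read against that same list, which is the point at which $\mathcal{F}$ must effectively coincide with $\families{n}{k}{m}$. Second, Lemma~\ref{lemma:nef} speaks abstractly of ``a bijection between $F$ and $F'$'', whereas Proposition~\ref{prop:inj}(6) is directional and requires $F'$ to be the \emph{image} of $F$ under an injective map. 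Bridging this gap means tracing through the sieving recursion that defines $\nef{\mathcal{F}}{P}$ that each surviving representative is reached from $F$ by applying permutations from $P$ (whose composition is again a single injective permutation of $\listn{n}$), and orienting the bijection so that the verified representative plays the role of the image $F'$ and the given family the role of the preimage $F$. Once this orientation is fixed, the lemma is an immediate composition of the two cited facts.
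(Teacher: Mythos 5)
Your proposal matches the paper's proof exactly: fix an arbitrary $F \in \families{n}{k}{m}$, obtain a bijective representative $F' \in \nef{\mathcal{F}}{P}$ via Lemma~\ref{lemma:nef}, and transport the FC-property back along the bijection using Proposition~\ref{prop:inj}. The two bookkeeping concerns you raise (that Lemma~\ref{lemma:nef} only covers members of the input list $\mathcal{F}$, and that the bijection must be oriented as an image under an injective map) are legitimate, but the paper's own three-line proof silently glosses over both, so you are if anything more careful than the source.
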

\begin{proof}
  Let $F \in \families{n}{k}{m}$. By Lemma \ref{lemma:nef} there is an
  $F' \in \nef{\mathcal{F}}{P}$ and a bijection between $F$ and
  $F'$. So, $F'$ is an FC-family, and by Proposition \ref{prop:inj}, so
  is $F$.
\end{proof}

\section{FC-families verified}
\label{sec:fc}
Having established all the necessary mathematics, in this Section we
prove that certain uniform families are FC-families (mainly by
performing verified calculations). First, we calculate non-equivalent
representatives for $\families{5}{3}{3}$, $\families{6}{3}{4}$, and
$\families{7}{3}{4}$.

\begin{lemma}
\label{lemma:calc_nef}
The first column of Table \ref{table:ssn} contains (respectively)
all elements of:

\begin{small}
  $\nef{\families{5}{3}{3}}{\permute{[0..<5]}}$,

  $\nef{(\filter{(\lambda F.\ \neg {\sf check_{533}\ F})}{\families{6}{3}{4}})}{\permute{[0..<6]}}$,

  $\nef{(\filter{(\lambda F.\ \neg {\sf check_{533}}\ F \wedge \neg {\sf check_{634}}\ F)}{\families{7}{3}{4}})}{\permute{[0..<7]}},$
\end{small}

\noindent where $\permute{l}$ is the function that generates all
permutations of a list $l$, ${\sf check_{533}}$ is a function that
checks if any 3 of the 4 given 3-element sets are have their union
contained in a 5-element set, and ${\sf check_{634}}$ is a function
that checks if the union of 4 given 3-element sets is contained in a
6-element set.\footnote{Formal definition of these functions is not
  given here and is available in the Isabelle/HOL proof documents,
  along with correctness arguments.}
\end{lemma}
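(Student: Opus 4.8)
The statement is purely computational: each of the three list expressions is built entirely from functions that have executable definitions in Isabelle/HOL, so the plan is to discharge the lemma by verified evaluation (reflection through the code generator, e.g.\ the \emph{eval} proof method) rather than by a hand-written argument. Concretely, I would first confirm that every ingredient --- the standard list combinators (${\sf filter}$, ${\sf map}$, ${\sf remdups}$, ${\sf sort}$), the problem-specific generators ${\sf comb}$ and ${\sf perm}$, the sieving function ${\sf nef}$, and the predicates ${\sf check_{533}}$ and ${\sf check_{634}}$ --- is equipped with code equations. The proof then reduces to asking the code generator to compute each of the three expressions and to check, by a single evaluation, that the resulting list is equal to the corresponding entry in the first column of Table~\ref{table:ssn}.

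The order of computation for each row would be the same: generate the candidate $nkm$-lists via $\families{n}{k}{m} = \combine{(\combine{[0..<n]}{k})}{m}$; apply the filters that remove families already covered by smaller FC-families (no filter for the first row, ${\sf check_{533}}$ for the second, and both ${\sf check_{533}}$ and ${\sf check_{634}}$ for the third); generate the permutation list $\permute{[0..<n]}$; and finally run ${\sf nef}$ to sieve out permutation-equivalent families, keeping one representative of each equivalence class. Since $\families{n}{k}{m}$, the filters, and ${\sf nef}$ all produce explicit finite lists, equality with the tabulated column is decidable and is exactly what the evaluation certifies.

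The main obstacle will be computational feasibility rather than anything mathematical. The third row is by far the heaviest: $\combine{[0..<7]}{3}$ has $\binom{7}{3}=35$ elements, so $\families{7}{3}{4}$ contains $\binom{35}{4}=52360$ candidate families, and for each surviving representative the sieve must compare against all $7!=5040$ permuted images. A naive evaluation would therefore be prohibitively slow and memory-hungry, which is precisely why the filters are applied \emph{before} the expensive permutation sieving (discarding families whose union is small enough to be covered by an already-established FC-family) and why the families are carried in the compact list-of-nats representation described in Section~\ref{sec:search}. The delicate part is thus engineering the executable definitions --- efficient permutation enumeration, early filtering, and deduplication through ${\sf remdups}$ --- so that the single evaluation terminates in acceptable time; the correctness of the sieve itself is already furnished by Lemma~\ref{lemma:nef}, so nothing beyond the computation remains to be checked.
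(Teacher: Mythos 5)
Your proposal matches the paper exactly: the paper's entire proof of this lemma is ``By calculations performed by a computer,'' and your plan of discharging it by verified evaluation of the executable definitions (with the filters applied before the permutation sieve for feasibility) is precisely what that amounts to. Your elaboration on code equations and computational cost is accurate but adds nothing the paper's one-line proof does not already presuppose.
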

\begin{proof}
  By calculations performed by a computer.
\end{proof}

\newcommand{\wt}[2]{#1 \mapsto #2}
\begin{table}[!t]
  \centering
  \begin{footnotesize}
  \begin{tabular}{c|l}
    $F_c$ &\  $w$\\\hline
    $[[0, 1]]$ &\ $\wt{0}{1}, \wt{1}{1}$ \\\hline
    $[[0, 1, 2], [0, 1, 3], [2, 3, 4]]$ &\  $\wt{0}{2}, \wt{1}{2}, \wt{2}{2}, \wt{3}{2}, \wt{4}{1}$\\
    $[[0, 1, 2], [0, 1, 3], [0, 2, 4]]$ &\  $\wt{0}{6}, \wt{1}{5}, \wt{2}{5}, \wt{3}{3}, \wt{4}{3}$\\
    $[[0, 1, 2], [0, 1, 3], [0, 2, 3]]$ &\  $\wt{0}{1}, \wt{1}{1}, \wt{2}{1}, \wt{3}{1}$\\
    $[[0, 1, 2], [0, 1, 3], [0, 1, 4]]$ &\  $\wt{0}{3}, \wt{1}{3}, \wt{2}{2}, \wt{3}{2}, \wt{4}{2}$\\\hline
    $[[0, 1, 2], [0, 3, 4], [1, 3, 5], [2, 4, 5]]$ &\  $\wt{0}{1}, \wt{1}{1}, \wt{2}{1}, \wt{3}{1}, \wt{4}{1}, \wt{5}{1}$\\
    $[[0, 1, 2], [0, 1, 3], [2, 4, 5], [3, 4, 5]]$ &\  $\wt{0}{1}, \wt{1}{1}, \wt{2}{1}, \wt{3}{1}, \wt{4}{1}, \wt{5}{1}$\\\hline
    $[[0, 1, 2], [0, 3, 4], [1, 3, 5], [2, 4, 6]]$ &\  $\wt{0}{2}, \wt{1}{2}, \wt{2}{2}, \wt{3}{2}, \wt{4}{2}, \wt{5}{1}, \wt{6}{1}$\\
    $[[0, 1, 2], [0, 3, 4], [0, 5, 6], [1, 3, 5]]$ &\  $\wt{0}{2}, \wt{1}{1}, \wt{2}{1}, \wt{3}{1}, \wt{4}{1}, \wt{5}{1}, \wt{6}{1}$\\
    $[[0, 1, 2], [0, 1, 3], [2, 4, 5], [4, 5, 6]]$ &\  $\wt{0}{3}, \wt{1}{3}, \wt{2}{4}, \wt{3}{2}, \wt{4}{3}, \wt{5}{3}, \wt{6}{2}$ \\
    $[[0, 1, 2], [0, 1, 3], [2, 4, 5], [3, 4, 6]]$ &\  $\wt{0}{3}, \wt{1}{3}, \wt{2}{3}, \wt{3}{3}, \wt{4}{2}, \wt{5}{1}, \wt{6}{1}$\\
    $[[0, 1, 2], [0, 1, 3], [0, 4, 5], [4, 5, 6]]$ &\  $\wt{0}{6}, \wt{1}{4}, \wt{2}{3}, \wt{3}{3}, \wt{4}{4}, \wt{5}{4}, \wt{6}{2}$\\
    $[[0, 1, 2], [0, 1, 3], [0, 4, 5], [2, 4, 6]]$ &\  $\wt{0}{3}, \wt{1}{2}, \wt{2}{3}, \wt{3}{1}, \wt{4}{3}, \wt{5}{2}, \wt{6}{2}$\\
    $[[0, 1, 2], [0, 1, 3], [0, 4, 5], [1, 4, 6]]$ &\  $\wt{0}{2}, \wt{1}{2}, \wt{2}{1}, \wt{3}{1}, \wt{4}{1}, \wt{5}{1}, \wt{6}{1}$\\
    $[[0, 1, 2], [0, 1, 3], [0, 4, 5], [0, 4, 6]]$ &\  $\wt{0}{2}, \wt{1}{1}, \wt{2}{1}, \wt{3}{1}, \wt{4}{1}, \wt{5}{1}, \wt{6}{1}$
  \end{tabular}
  \end{footnotesize}
  \caption{Families and weights}
  \label{table:ssn}
\end{table}

Next, we show that all these representatives have non-negative shares.
\begin{lemma}
  \label{lemma:calc_ssn}
  For all $F_c$ and $w$ given in Table \ref{table:ssn}, it holds that
  $\ssn{\tilde{F_c}}{w} = \bot$.
\end{lemma}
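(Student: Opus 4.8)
The plan is to treat this as a proof by computation: the statement asserts nothing more than the Boolean value of the search procedure $\ssn{\tilde{F_c}}{w}$ on the concrete inputs listed in Table~\ref{table:ssn}, so once we are entitled to \emph{evaluate} that procedure inside Isabelle/HOL, the lemma follows by reading off the returned value. First I would invoke the chain of refinements established in Section~\ref{sec:search}: by definition $\ssn{\tilde{F_c}}{w} \equiv \ssnaux{L}{\emptyset}{\closure{F_c}}{w}{(\Union F_c)}$, where $L$ enumerates the subsets of $\Union F_c$ with negative share, and this abstract recursion was shown equivalent first to the list-of-nats version $\ssnaux{(\dots)}{(\dots)}{\tilde{F_c}}{w}{X}$ and then to the fully executable variant $\ssnauxa{(\dots)}{(\dots)}{\tilde{F_c}}{s_w}$ that works on codes and a precomputed share table. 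Hence it suffices to run the executable variant and confirm it returns $\bot$ in each case.

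Concretely, for every row $(F_c, w)$ I would set $X = \Union F_c$, tabulate $s_w$ so that it stores $\ss{A}{w}{X}$ for every $A \in \pow{X}$, encode the closure $\closure{F_c}$ in list-of-nats form, and build the input list of pairs $(\tilde{A}, \ss{A}{w}{X})$ over exactly those $A \subseteq X$ with $\ss{A}{w}{X} < 0$, paired with their running sum. Evaluating $\ssnauxa{(\dots)}{(\emptylist,0)}{\tilde{F_c}}{s_w}$ then carries out the pruned depth-first traversal of all union-closed extensions of $F_c$, and the claim is that the evaluator yields $\bot$ for each row of Table~\ref{table:ssn}. Because the executable procedure is provably equal to the abstract $\ssn{\tilde{F_c}}{w}$, these evaluations discharge the lemma directly, with no external certificate needed; this is the same ``by calculation'' mechanism used for Lemma~\ref{lemma:calc_nef}. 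Combined with Theorem~\ref{lemma:ssn_correct} and Theorem~\ref{thm:FC_uce_shares_nonneg}, each $\bot$ will later certify that the corresponding $F_c$ is an FC-family, but the present lemma only needs the computed value.

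The main obstacle I anticipate is not mathematical but computational feasibility. For the families over seven points the space of union-closed extensions is astronomically large, so a naive enumeration would never terminate inside the evaluator; the entire burden rests on the pruning baked into $\ssnauxa{}{}{}{}$ — short-circuiting as soon as $s_t + s_l \ge 0$, skipping any $\tilde h$ that closure has already forced into $\tilde{F_t}$, and consulting the precomputed $s_w$ rather than recomputing $\ss{A}{w}{X}$. The delicate point is therefore ensuring that each of these searches actually completes within the code evaluator in acceptable time and space, which is precisely what the equivalence proofs of Section~\ref{sec:search} were designed to license. As a secondary check I would verify that each $w$ in the table is a genuine weight function on $\Union F_c$ (some entry is positive), so that the downstream appeal to Theorem~\ref{thm:FC_uce_shares_nonneg} is sound, though this plays no role in the bare evaluation asserted here.
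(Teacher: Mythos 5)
Your proposal matches the paper exactly: the paper's proof of Lemma~\ref{lemma:calc_ssn} is literally ``By calculations performed by a computer,'' relying on the executable refinement of $\ssn{F_c}{w}$ developed in Section~\ref{sec:search} just as you describe. Your additional remarks on feasibility and on checking that each $w$ is a genuine weight function are sensible but not part of this lemma's proof.
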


\begin{proof}
  By calculations performed by a computer.
\end{proof}

Finally, the main result can be easily proved.

\begin{theorem} The following are FC-families:
  \begin{enumerate}
  \item all families containing one 1-element set (i.e., $\{\{a\}\}$);
  \item all families containing one 2-element set (i.e., $\{\{a,
    b\}\}$, for $a \neq b$);
  \item all families containing 3 3-element sets whose union
    is contained in a 5-element set (i.e., uniform $533$-families);
  \item all families containing 4 3-element sets whose union
    is contained in a 6-element set (i.e., uniform $634$-families);
  \item all families containing 4 3-element sets whose union
    is contained in a 7-element set (i.e., uniform $734$-families).
  \end{enumerate}
\end{theorem}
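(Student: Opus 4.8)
The plan is to bolt the five cases onto the two calculation lemmas (Lemma~\ref{lemma:calc_nef} and Lemma~\ref{lemma:calc_ssn}) via the search-soundness theorem (Theorem~\ref{lemma:ssn_correct}) and the share characterization of FC-families (Theorem~\ref{thm:FC_uce_shares_nonneg}). Two elementary principles organize the reduction. The first is a \emph{superset principle}: if $F_c \supseteq F_c'$ and $F_c'$ is an FC-family, then $F_c$ is an FC-family, since any union-closed $F \supseteq F_c$ also satisfies $F \supseteq F_c'$ and is therefore Frankl's; this is exactly what lets each statement read ``all families containing \dots''. The second is a \emph{naturality reduction}: by Proposition~\ref{prop:inj}, FC-ness is invariant under injective relabelling of the domain, so every uniform $nkm$-family is an FC-family as soon as its natural representative in $\families{n}{k}{m}$ is, and it suffices to analyse natural families.

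The core engine is uniform across the cases: for a natural $F_c$ paired with the weight $w$ in Table~\ref{table:ssn}, Lemma~\ref{lemma:calc_ssn} gives $\ssn{\tilde{F_c}}{w} = \bot$, Theorem~\ref{lemma:ssn_correct} then delivers $\fs{F'}{w}{(\Union{F_c})} \ge 0$ for every $F' \in \uce{F_c}$, and Theorem~\ref{thm:FC_uce_shares_nonneg} concludes that $F_c$ is an FC-family. I would dispatch case~(2) by running this engine on the first table row $[[0,1]]$ and then lifting by naturality to every $2$-element set; case~(1) is the same but with $F_c = \{\{0\}\}$ under unit weight, where the few union-closed extensions inside $\pow{\{0\}}$ are checked by hand to have nonnegative share, so Theorem~\ref{thm:FC_uce_shares_nonneg} still applies. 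For case~(3) I would run the engine on the four $533$-representatives (rows 2--5), invoke Lemma~\ref{lemma:calc_nef} and Lemma~\ref{lemma:FC_family_non_equivalent_families} to propagate FC-ness from these representatives to all of $\families{5}{3}{3}$, and finally use naturality to reach every uniform $533$-family.

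Cases~(4) and~(5) are where the superset principle absorbs the families deliberately discarded by the filters in Lemma~\ref{lemma:calc_nef}. For case~(4), split a natural $634$-family into those in which some three of the four sets have union inside a $5$-element set (the predicate ${\sf check_{533}}$), which contain a $533$-subfamily and are FC by case~(3) plus the superset principle, and the rest, which lie in $\filter{(\lambda F.\ \neg {\sf check_{533}}\ F)}{\families{6}{3}{4}}$ and are covered by the engine on rows 6--7 together with Lemma~\ref{lemma:FC_family_non_equivalent_families}; naturality and superset then reach all families containing a $634$-structure. Case~(5) uses the analogous three-way split of a natural $734$-family: if all four sets already fit in a $6$-element set (${\sf check_{634}}$) it is a $634$-family, FC by case~(4); if three of them fit in a $5$-element set (${\sf check_{533}}$) it contains a $533$-subfamily, FC by case~(3); otherwise it lies in the doubly-filtered list handled by the engine on rows 8--15.

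The main obstacle is not any individual deduction but the bookkeeping of these case splits. One must check that the predicates ${\sf check_{533}}$ and ${\sf check_{634}}$ capture exactly ``contains a $533$-subfamily'' and ``is a $634$-family'', so that the filtered searches in Lemma~\ref{lemma:calc_nef} genuinely cover the complement of the families handled by the superset principle, and that naturality is applied before Lemma~\ref{lemma:FC_family_non_equivalent_families} so the permutation classes line up. All computational content is already discharged by the two calculation lemmas, leaving only this logical orchestration of superset, naturality, and search soundness.
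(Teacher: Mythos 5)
Your proposal matches the paper's own proof essentially step for step: the same engine (Lemma~\ref{lemma:calc_ssn} $\rightarrow$ Theorem~\ref{lemma:ssn_correct} $\rightarrow$ Theorem~\ref{thm:FC_uce_shares_nonneg}) applied to the table rows, the same reduction to natural families via Proposition~\ref{prop:inj}, the same use of Lemmas~\ref{lemma:calc_nef} and~\ref{lemma:FC_family_non_equivalent_families} for case~3, and the same ${\sf check_{533}}$/${\sf check_{634}}$ case splits (with the implicit superset principle) for cases~4 and~5. The only deviation is case~1, which the paper dispatches by the direct pairing $A \mapsto A \union \{a\}$ rather than by running the share machinery on $\{\{0\}\}$ as you do; both arguments are valid.
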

\begin{proof}
The case 1~trivially holds (since for each family member $A$ that does
not contain $a$, there is a member $A \union \{a\}$ that contains
$a$).

Other proofs are based on the techniques described in this paper. By
Proposition \ref{prop:inj} it suffices to consider only families $F$
such that $\Union F \subseteq {\setn{n}}$. All families corresponding
to rows in Table \ref{table:ssn} are FC-families. Indeed, for each
$F_c$ and $w$ given in a table row, by Lemma \ref{lemma:calc_ssn} it
holds that $\ssn{F_c}{w}$. Therefore, by Lemma \ref{lemma:ssn_correct}
for all $F' \in \uce{F_c}$ it holds that $\fs{F'}{w}{(\Union{F_c})}
\ge 0$. Then, $F_c$ is FC-family by Theorem
\ref{thm:FC_uce_shares_nonneg}.

In the case 2~this completes the proof. 

In the case 3~the statement holds by Lemma
\ref{lemma:FC_family_non_equivalent_families}, since, by Lemma
\ref{lemma:calc_nef} four rows given in Table \ref{table:ssn}
correspond to four non-equivalent families.

To show the case 4, let $F_c$ be any family containing 4 3-element
sets whose union is contained in $\{0, 1, \ldots, 5\}$ and let $F$ be
a union-closed family such that $F \supseteq F_c$. If ${\sf
  check}_{533}\ F_c$ holds (i.e., if union of any 3 members of $F_c$
is contained in a 5-element set), then $F$ is Frankl's by case 3. If
$\neg {\sf check}_{533}\ F_c$ holds, then $F_c$ is in
$\filter{(\lambda F. \neg {\sf
    check}_{533}\ F)}{\families{6}{3}{4}}$. The statement then holds
by Lemma \ref{lemma:FC_family_non_equivalent_families}, since, by
Lemma \ref{lemma:calc_nef} two rows given in Table \ref{table:ssn}
correspond to two non-equivalent families of $\filter{(\lambda F. \neg
  {\sf check}_{533}\ F)}{\families{6}{3}{4}}$.

The case 5 is proved similarly, using the proofs for both the case 3
and the case 4.
\end{proof}

\section{Conclusions and further work}
\label{sec:conclusions}
In this paper, we have formalized (within Isabelle/HOL) a
computer-assisted approach of \v Zivkovi\' c and Vu\v ckovi\' c for
verifying FC-families. Well-known FC-families are confirmed and a
new uniform FC-family is discovered.

The Isabelle/HOL formalization has around 260KB of data organized into
around 6500 lines of Isabelle/Isar proof text. Ratio between the size
of the formalization and the size of the corresponding pen and paper
proof (DeBruijn index) is estimated at around 5.5. Total time required
to do the formalization is very roughly estimated at around 200
man/hours (25 full working days spread over a period of around 8
months).

Total proof checking time of Isabelle/HOL takes around 28 minutes on a
notebook PC with 2.1GHz Intel/Pentium CPU and 4GB RAM. The major
fraction of this time (around 23 minutes) is spent in the
combinatorial search. Checking Lemma \ref{lemma:calc_ssn} consumes
most of this time, and its last 8 cases (related to the uniform-734
families) alone take 22.8 minutes. This is quite long compared to the
original JAVA programs (that perform the whole combinatorial search in
around 1 minute), but still bearable. The big difference is due to the
use of machine-integers supporting atomic bitwise-or in JAVA and the
use of big-integers that do not support atomic bitwise-or in
Isabelle/ML. The search time could be reduced if machine-integers were
also used in Isabelle/ML. In a simple approach, the code generator
could be instructed to replace mathematical integers in the
formalization by machine-integers in the code, but that would make a
gap between the formalization and the generated code and would require
trusting that no overflows occur. A better approach would require
formalizing machine-integers and their properties and using them
within the formalization itself.

Compared to the prior pen-and-paper work, the computer assisted
approach significantly reduces the complexity of mathematical
arguments behind the proof and employs computing-machinery in doing
its best --- quickly enumerating and checking a large search
space. This enables formulation of a general framework for checking
various FC-families, without the need of employing human intellectual
resources in analyzing specificities of separate families. Compared to
the work of \v Zivkovi\' c and Vu\v ckovi\' c, apart from achieving
the highest level of trust possible, the significant contribution of
the formalization is the clear separation of mathematical background
and combinatorial search algorithms, not present in earlier
work. Also, separation of abstract properties of search algorithms and
technical details of their implementation significantly simplifies
reasoning about their correctness and brings them much closer to
classic mathematical audience, not inclined towards computer science.

This work represents a significant part in formally proving the
Frankl's conjecture for families $F$ such that $\card{\Union{F}} \le
11$, and $\card{\Union{F}} \le 12$ (already informally done by \v
Zivkovi\' c and Vu\v ckovi\' c \cite{frankl-zivkovic-vuckovic}) which
in the focus of our current and future work. We also plan to
investigate other FC-families (not necessarily uniform).


\end{document}